\newcommand{\Comments}{0}
\definecolor{gray}{gray}{0.5}
\definecolor{darkgreen}{rgb}{0,0.5,0}
\newcommand{\mynote}[2]{\ifnum\Comments=1\textcolor{#1}{#2}\fi}
\newcommand{\ian}[1]{\mynote{blue}{[IAK: #1]}}
\newcommand{\raf}[1]{\mynote{darkgreen}{[RMF: #1]}}
\renewcommand{\O}{\mathcal{O}}
\renewcommand{\vec}[1]{{\mathbf{#1}}}
\newcommand{\x}{\vec{x}}
\newcommand{\conv}{\convhull}
\newcommand{\clo}{\mathrm{cl}}
\newcommand{\cell}{\mathrm{cell}}
\def\reals{\mathbb{R}}
\def\extreals{\mathbb{\overline{R}}}
\newcommand{\argmax}{\mathop{\mathrm{argmax}}}
\renewcommand{\conv}{\mathrm{conv}}
\newcommand{\first}{daskalakis2013mechanism}
\newcommand{\second}{giannakopoulos2014duality}
\newcommand{\third}{daskalakis2015strong}
\renewcommand{\Comments}{0} 
\begin{document}

\title{Optimal Auctions with Restricted Allocations}
\author{
IAN A. KASH
\affil{Microsoft Research}
RAFAEL M. FRONGILLO
\affil{University of Colorado Boulder}
}

\begin{abstract}
We study the problem of designing optimal auctions under restrictions on the set of permissible allocations.  In addition to allowing us to restrict to deterministic mechanisms, we can also indirectly model non-additive valuations.  We prove a strong duality result, extending a result due to \citet{\third}, that guarantees the existence of a certificate of optimality for optimal restricted mechanisms.  As a corollary of our result, we provide a new characterization of the set of allocations that the optimal mechanism may actually use.  To illustrate our result we find and certify optimal mechanisms for four settings where previous frameworks do not apply, and provide new economic intuition about some of the tools that have previously been used to find optimal mechanisms.
\end{abstract}
\maketitle

\section{Introduction}

The problem of revenue maximization a monopolist seller with a single item to sell to a single buyer was solved by \citet{myerson1981optimal}, who showed that the optimal mechanism provides a menu with two options to the buyer: get the item with probability zero for free or get the item with probability one at a price that depends on the distribution of the buyer's value.  The setting with multiple items is significantly more challenging, however, even when restricted to two items, as a line of work from economics has shown~\citep{mcafee1988multidimensional,rochet1998ironing,thanassoulis2004haggling,manelli2006bundling,manelli2007multidimensional,chung2007non-differentiable,pavlov2011optimal,noldeke2015implementation}.

Significant progress has been made in the past few years, using a new duality technique introduced by \citet{\first}.  Subsequent work has refined this technique, allowing for the design of both optimal and approximately optimal mechanisms in a variety of settings~\citep{\second,\third,giannakopoulos2015bounding,giannakopoulos2015selling}.  See \citet{daskalakis2015multi-item} for a survey.
However, all this previous work has had two key restrictions:
\begin{enumerate}
\item valuations are additive, in the sense that the value for a bundle of items is the sum of values for the individual items, and
\item the mechanism can choose any (randomized) allocation of the items.
\end{enumerate}

We relax this second restriction, by showing how to extend this duality approach to mechanism design problems where the set of permissible allocations is more general.  This allows problems with multiple copies of items or a requirement that the mechanism be deterministic.  Indirectly, it also allows us to relax the first restriction.  The reason is that {\em any} problem can be made additive by redefining the space of types to assign a value to each possible bundle.  However, after this transformation the mechanism must be restricted to only offer probability distributions over bundles.  Thus, by allowing restrictions on allocations we also enlarge the space of valuations to which these techniques can be applied.

\subsection{General Duality}
Our main result is a generalization of the strong duality framework of \citet{\third} between optimal mechanism design and a dual problem that can be interpreted as finding an optimal matching, and we broadly follow their approach.
\begin{itemize}
\item The standard way of phrasing the optimal mechanism design problem is in terms selecting a convex consumer surplus function $u$.  The constraint that $u$ is a valid mechanism corresponds requiring that the subgradients of $u$ correspond to permissible allocations.
The revenue of a given $u$ is then easily written as an expression in terms of both $u$ and its subgradients.  However, this formulation seems difficult to work with, so Daskalakis et al. use the divergence theorem to constuct a measure $\mu$ such that the objective function can be rewritten as $\int u d\mu$, thus excluding the subgradients of $u$ from the objective function.  We adopt this approach wholesale.
\item To define feasible dual solutions, their approach makes use of the notion of convex dominance.  A measure $\mu$ convex dominates a measure $\mu'$ if $\int u d\mu \geq \in u d\mu'$ for all (non-decreasing) convex $u$.  This makes convex dominance an instance of an integral stochastic order~\citep{muller2002comparison}.  The key to our approach is to use more general such orders, in particular taking the order induced by set of $u$ that are actually feasible mechanisms rather than all convex $u$.
\item Using our more general stochastic order, we prove a duality theorem of which their result is a special case.
\item To illustrate our framework, we illustrate optimal auctions and corresponding dual witnesses for a number of settings that prior techniques could not handle.  First, we restrictions on the set of possible allocations, specifically the case of two item i.i.d uniform [0,1] items with the restriction to allocate at most one item and the restriction to allocate exactly one item.  Second, we consider two i.i.d exponentially distributed items with the restriction to deterministic mechanisms.  The previous examples have restricted the set of allocations but used additive valuations.  Our final example looks at a setting with non-additive valuations (there are two items and the goods are complements) and unrestricted allocations, and shows that bundling is optimal.
\end{itemize}

\subsection{ Finding Good Mechanisms}
 Our main theorem is a tool that certifies that a mechanism is optimal---it does not directly provide guidance on how to find the optimal mechanism in the first place.  Thus, we provide several additional results that provide additional structure to aid in the search for good mechanisms.  
\begin{itemize}
\item We prove a lemma showing that optimal mechanisms only use allocations that are exposed points of the set of allowable allocations.  In particular, all previous known optimal mechanisms for the additive case are zero set mechanisms~\citep{\third}, where the buyer is either allocated nothing or at least one item with probability 1.  We show that in fact a slightly weaker version of this is a necessary condition, and that suitably generalized it holds beyond the additive setting.
\item One useful observation made by \citet{\third} for the case where the optimal mechanism has a finite number of allocations is that if $\cell(s)$ is the set of types that receive allocation $s$ then $\int_{cell(s)} \mu=0$.  We provide economic intuition for this fact, showing that it corresponds to the simple first-order condition that an optimal mechanism prices each allocation optimally.  That is, the derivative of revenue with respect to changing the price of any one allocation is 0.
\item Any feasible dual solution provides an upper bound on the revenue of the optimal mechanism. \citet{giannakopoulos2015bounding} uses this observation to certify the approximate optimality of mechanisms.  We examine our dual problem, and show that it suggests that a better understanding of the integral stochastic orders we use would shed light on the question of how well the optimal mechanism can be approximated by a deterministic mechanism.
\end{itemize}

\section{Preliminaries}

Let $\O$ be a set of outcomes.  There is a single agent whose preference over outcomes can be represented as a function $x \in \reals^\O_{\geq 0}$, known as the agent's {\em type}.  That is, this function associates each outcome with a non-negative value for the agent.
We work in a Bayesian setting, so there is a distribution $f$ over $\reals^\O_{\geq 0}$ from which the agent's type is drawn.
An allocation $s$ is a linear function in the dual space $\reals^\O \rightarrow \reals$.  The value of the agent is then $s(x)$.  If $\O$ is finite, both $x$ and $s$ can be represented as vectors of dimension $|\O| = n$ and $s(x) = s \cdot x$.  Because the techniques we use to derive the objective function (see Section~\ref{sec:objective}) are restricted to the finite case we assume that $\O$ is finite and use the $s \cdot x$ notation, but most of our remaining techniques do not rely on this assumption and would generalize if this analysis does.

Despite calling $\O$ a set of outcomes, our formalism does not restrict $s$ to ensure that at most one outcome happens or that a given outcome can happen a single time.  Thus our formalism can handle auctions with multiple items or even multiple copies of a single item with additive valuations in two distinct ways.  $\O$ can be the set of individual items (or types of items), in which case $s$ indicates how many (in expectation) of each item are allocated.  Alternatively, we can let $\O$ be the set of all bundles of items, and $s$ give a probability distribution over bundles. \raf{At first I was confused what the alternative would be; might be good to clarify.} When there is a choice of representations, it is convenient to select one that results in a type distribution $f$ that meets the requirements of the derivation in Section~\ref{sec:objective} while also minimizing $|\O|$.  For an example that takes advantage of this flexibility, see Section~\ref{sec:example-nonadd}.

As this possibility of multiple representations suggests, it may be natural to have a restricted set of types $X$ and feasible allocations $S$ to accurately describe a given problem.  For example, we can restrict $X$ to capture additive valuations or that goods are substitutes, while we can restrict $S$ to impose constraints such as limited quantities of an item available, that $s$ is a probability distribution, or that the mechanism is deterministic (i.e. $s$ has an integer value in each coordinate).  Previous work on optimal mechanisms for the additive case has considered various restrictions on $X$, such as requiring values to lie in $[0,1]^{|\O|}$, but has always taken $S$ to be $[0,1]^{|\O|}$.  Without loss of generality, we require $X$ to be convex.

\subsection{Convex Representation}

In this setting it is without loss of generality to restrict to {\em truthful direct revelation mechanisms}.  These consist of a rule that takes a report of $x$ from the agent and then selects an allocation $s$ and payment $p$, which gives the agent a utility of $u(x) = s(x) \cdot x - p(x)$.  A standard fact is that we can represent such mechanisms as convex functions $u$ where $s(x)$ is a subgradient of $u$ at $x$ and $p(x) = u(x) - s(x) \cdot x$.
Thus, the set $U(X,S)$ of convex functions on
$X$
that are subdifferentiable with subgradients in $S$\footnote{Of course, any $u$ with subgradients in $S$ in general has subgradients in $\conv(S)$ as well.  We say ``subgradients in $S$'' to mean that $u$ is supported by $S$ alone: we can write $u(x) = \sup_{s\in S} s \cdot x + c_s$ for some constants $c_s$.} corresponds exactly to the truthful direct revelation mechanisms.
We further require that the mechanism be individually rational, a restriction we enforce via the objective function described in the next section.

Given a set of functions $U$, a geometric structure that will prove useful for our arguments is the cone generated by $U$, which consists of all scaled versions of elements of $U$.  That is, $U^\circ = \{ \lambda u ~|~ \lambda \geq 0 \wedge u \in U \}$.  Our arguments also make use of the convex hull $\conv(U)$ and the closure $\clo(U)$.

Finally, we make use of the fact that since $u$ is restricted to have subgradients in $S$, a simple upper bound on the difference in the value of $u$ between points $x$ and $y$ is $u(x) - u(y) \leq \sup_{s \in S} s \cdot (x-y) = \ell_S(x,y)$.  We will be using the concise $\ell_S$ notation for this supremum frequently.
We require $S$ to be closed and bounded, which ensures that this supremum is finite and realized.
To see the importance of closure, consider the case of selling a single item, with $S = [0,1)$.  There is no optimal mechanism restricted to this $S$.  We can approximate the optimal auction with $S = [0,1]$ arbitrarily well, but cannot realize it because the optimal auction only offers the item for sale with probability 1.  Similarly, without boundedness we could take $S = [0,\infty)$ and there would be no optimal auction because it is optimal to sell as much of the item as possible.

\subsection{Objective Function}
\label{sec:objective}

Following \citet{\third}, as long as the pdf of the type distribution $f$ is differentiable, we can write our problem as trying to choose the optimal consumer surplus function $u$ with respect to a measure $\mu$.  That is, we make use of the following definition and result from their work.

\begin{definition}
Let $X' \subset  \reals^n_{\geq 0}$ be a well-behaved%
\footnote{$X'$ is well behaved if it is a Jordan-measurable bounded Lipschitz domain} set of types.
Let $f : X' \rightarrow \reals_{\geq 0}$ be a (differentiable) probability density function with bounded partial derivatives.
Let $z_0 \in \reals^n_{\geq 0}$ be any point which is coordinate-wise less than all points in $X'$.
The signed Radon measure $\mu$ is a {\em transformed measure} of $f$ if the relation
$$\int h d\mu = h(z_0) + \int_{\partial X'} h(z)f(z) z \cdot \hat{n} dz - \int_{X'} h(z)(\nabla f(z) \cdot z + (n+1)f(z))dz$$
holds for all continuous bounded functions $h : \reals^n \rightarrow \reals$, where $\hat{n}$ denotes the outer unit normal field to the boundary $\partial X'$.
\end{definition}

\begin{theorem}[\citep{daskalakis2015strong}]
\label{thm:mu}
Let $X' \subset  \reals^n_{\geq 0}$ be a well-behaved set of types.
Let $f : X' \rightarrow \reals_{\geq 0}$ be a (differentiable) probability density function with bounded partial derivatives.
Then the problem of determining the optimal IC and IR mechanism for a single additive buyer whose type is distributed according to $f$ is equivalent to solving the optimization problem
$$\sup_{u \in U(X,S)} \int_X u d\mu$$
Where $X = [0,M]^n \supset X'$ and $\mu$ is a transformed measure of $f$. \raf{Clarify where $M$ comes from.}
\end{theorem}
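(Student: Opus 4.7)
The plan is to follow the standard envelope-theorem-plus-divergence-theorem route of \citet{\third}, adapted to the restricted subgradient set $S$. An IC mechanism is a convex $u\in U(X,S)$ with payment $p(x) = s(x)\cdot x - u(x)$ for a subgradient $s(x)\in\partial u(x)$; since convex functions are differentiable almost everywhere, $s(x) = \nabla u(x)$ a.e.\ and expected revenue is
\[
R(u) \;=\; \int_{X'} \bigl(\nabla u(x)\cdot x - u(x)\bigr)\,f(x)\,dx,
\]
subject to the IR constraint $u(x)\geq 0$ on $X'$. The goal is to rewrite this as $\int u\,d\mu$.

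To eliminate $\nabla u$ I would apply integration by parts coordinatewise: for each $i$,
\[
\int_{X'} x_i f\,\partial_i u\,dx \;=\; \int_{\partial X'} u\,f\, x_i\hat{n}_i\,dS \;-\; \int_{X'} u\bigl(f + x_i\,\partial_i f\bigr)dx.
\]
Summing over $i$ and subtracting $\int_{X'} u f\,dx$ converts $R(u)$ into
\[
R(u) \;=\; \int_{\partial X'} u\,f\,(x\cdot\hat{n})\,dS \;-\; \int_{X'} u\bigl((n+1)f + \nabla f\cdot x\bigr)dx,
\]
which compared with the definition of $\mu$ gives the identity $\int u\,d\mu = u(z_0) + R(u)$.

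The last step handles IR. Vertical shifts $u\mapsto u+c$ preserve $U(X,S)$ and leave $\int u\,d\mu$ invariant since $(u+c)(z_0)+R(u+c) = u(z_0)+c + R(u)-c$. Assuming (as is standard for allocation sets) that $S$ lies in the non-negative orthant, each $u\in U(X,S)$ is non-decreasing, so $u$ attains its minimum on $X'$ at $z_0$ and IR is equivalent to $u(z_0)\geq 0$. For any $u\in U(X,S)$ the shifted $u'=u-u(z_0)$ is then IR with $R(u')=\int u'\,d\mu=\int u\,d\mu$, while conversely any IR $u$ has $\int u\,d\mu\geq R(u)$; together these give $\sup_{u\in U(X,S)}\int u\,d\mu = \sup_{u\text{ IR}} R(u)$, completing the equivalence.

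The main obstacle is regularity: a generic $u\in U(X,S)$ is only subdifferentiable and therefore not $C^1$, so the classical divergence theorem does not apply verbatim. This is exactly why the definition imposes the well-behaved hypothesis on $X'$ (Jordan-measurable, bounded, Lipschitz) and the bounded-derivative hypothesis on $f$. I would discharge the regularity by mollifying $u$ to smooth convex approximants $u_\varepsilon$, applying the classical divergence theorem to each $u_\varepsilon$, and passing to the limit using that $\nabla u$ exists almost everywhere and is uniformly bounded by $\sup_{s\in S}\|s\|$ (finite because $S$ is closed and bounded); alternatively one can invoke a Gauss--Green theorem for Lipschitz domains with BV integrands.
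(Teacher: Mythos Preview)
The paper does not give its own proof of this theorem; it is quoted verbatim as a result of \citet{\third} and used as a black box (see the paragraph preceding the statement: ``we make use of the following definition and result from their work''). Your sketch is precisely the argument that underlies the cited result: write revenue as $\int_{X'}(\nabla u\cdot x - u)f$, integrate by parts via the divergence theorem on the vector field $uf\,x$ to remove $\nabla u$, and absorb the IR normalization into the point mass at $z_0$. So there is nothing to compare against in the present paper, and your outline matches the original derivation.

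Two small remarks on your write-up. First, the claim ``IR is equivalent to $u(z_0)\geq 0$'' is slightly off: $z_0$ is only required to be coordinatewise below every point of $X'$, so it need not lie in $X'$, and IR is imposed on $X'$. What you actually use (and what suffices) is that $u(z_0)\leq \inf_{X'}u$, so normalizing $u(z_0)=0$ forces $u\geq 0$ on $X'$; the converse implication is not needed. Second, the enlargement $X=[0,M]^n\supset X'$ deserves one sentence: since $\mu$ is supported on $X'\cup\{z_0\}\subset X$, and every $u\in U(X',S)$ extends to an element of $U(X,S)$ (e.g.\ by taking the supremum of its supporting affine functions over all of $X$), the sup over $U(X,S)$ and over $U(X',S)$ coincide. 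Otherwise your handling of the regularity issue (mollify, use the uniform bound $\|\nabla u\|\leq\sup_{s\in S}\|s\|$, pass to the limit) is exactly the standard route.
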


Essentially, their result says that as long as we pick a ``nice'' $f$, it has a transformed measure $\mu$ such that finding the optimal mechanism is the same as maximizing $\int_X u d\mu$, where we have enlarged the set of types to be a hypercube.  While this theorem requires $X'$ to be bounded, they show that this can be relaxed as long as $f$ decays sufficiently rapidly, and we use this extension in Section~\ref{sec:example-deterministic}.  Earlier work from has used more restricted versions of this observation.
For example, in economics, \citet{manelli2006bundling} and \citet{thanassoulis2004haggling} used integration by parts to get such a measure $\mu$,
while in computer science, special cases of this approach were used by \citet{\first} and \citet{\second}.

\subsection{Measure Theory}

As our new objective involves a measure $\mu$, we introduce some notation and definitions from measure theory.  Let $X \subset \reals^n$ be compact.  Then
\begin{itemize}
\item $\mu \in Radon(X)$, the set of signed Radon measures supported within $X$.
\item By the Jordan decomposition theorem, $\mu = \mu_+ - \mu_-$ where $\mu_+,\mu_- \in Radon_+(X)$, the set of positive Radon measures supported within $X$.  Intuitively $\mu_+$ is the positive part of $\mu$ and $\mu_-$ is the negative part.
\item For $\gamma \in Radon(X \times X)$ (or $Radon_+(X \times X)$), the marginal measures $\gamma_1$ and $\gamma_2 \in Radon(X)$ (or $Radon_+(X)$) are defined as $\gamma_1(X') = \gamma(X',X)$ and $\gamma_2(X') = \gamma(X,X')$.
\end{itemize}

\subsection{Stochastic Orders}

Having reduced our objective function to something of the form $\int_X u d\mu$, previous duality based approaches have observed that it is useful to upper bound this with a transformation $\mu'$ of $\mu$, but one that guarantees that $\int u d\mu \leq \int u d\mu'$.  The problem is that we do not know what $u$ is, so we need this relation to hold for a large enough class of functions.  This guarantee is provided by integral stochastic orders~\citep{muller2002comparison}.  Let $U$ be some set of functions $X \to \reals$.  Then $\mu' \succeq_{U} \mu$ if $\int_X u d\mu' \geq \int_X u d\mu$ for all $u \in U$.  Orders used in prior work include the stochastic order ($U$ is all increasing functions) and the convex order ($U$ is all convex functions).  The natural order for our setting is to take $U$ to be all feasible candidates for $u$.  That is, $U = U(X,S)$.  Since this order is entirely determined by $X$ and $S$, for brevity we will write $\succeq_{X,S}$ rather than $\succeq_{U(X,S)}$.

\section{General Strong Duality}

Our main result is the following theorem, which establishes strong duality between the optimal mechanism $u$ an a matching $\gamma$.

\begin{theorem}
\label{thm:strong}
Let $X = [0,M]^n$ and let $\mu \in Radon(X)$ such that $\mu(X) = 0$.  Then
\begin{equation}
\label{eqn:strong}
\sup_{u \in U(X,S))} \int_X ud\mu = \min_{\substack{\gamma \in Radon_+(X \times X)\\ \gamma_1 - \gamma_2 \succeq_{X,S} \mu}} \int_{X \times X} \ell_S(x,y)d\gamma(x,y)
\end{equation}
\end{theorem}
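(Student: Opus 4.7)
The plan is to adapt the strong-duality argument of \citet{\third} from the convex order to the general stochastic order $\succeq_{X,S}$, keeping the same overall Monge--Kantorovich-style structure. The proof splits into weak duality (easy) and strong duality (the technical core).

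For weak duality, I pick any feasible $u \in U(X,S)$ and any $\gamma \in Radon_+(X\times X)$ with $\gamma_1-\gamma_2\succeq_{X,S}\mu$. By the definition of the order, $\int u\,d\mu \leq \int u\,d(\gamma_1-\gamma_2) = \int (u(x)-u(y))\,d\gamma(x,y)$. For any subgradient $s$ of $u$ at $x$ one has $u(y)\geq u(x)+s\cdot(y-x)$, whence $u(x)-u(y)\leq s\cdot(x-y)\leq \ell_S(x,y)$ because $s\in S$. Integrating against $\gamma$ yields $\int u\,d\mu\leq \int\ell_S\,d\gamma$, giving the $\leq$ direction.

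For strong duality, I would set up the Lagrangian $L(u,\gamma) = \int u\,d\mu + \int_{X\times X}[\ell_S(x,y)-u(x)+u(y)]\,d\gamma(x,y)$ and apply a minimax theorem to exchange the sup over $u\in U(X,S)$ with the inf over $\gamma\in Radon_+(X\times X)$. The inner $\inf_\gamma L(u,\gamma)$ equals $\int u\,d\mu$ for $u \in U(X,S)$ (attained at $\gamma=0$, since the bracketed integrand is non-negative by the weak-duality calculation), so $\sup_u\inf_\gamma L$ recovers the primal LHS. After the swap, $\sup_u L(u,\gamma)$ isolates the dual feasibility constraint: it equals $\int\ell_S\,d\gamma$ when $\gamma_1-\gamma_2\succeq_{X,S}\mu$ and $+\infty$ otherwise, recovering the dual RHS. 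WLOG $S$ is convex (since $\ell_S=\ell_{\conv S}$ and $U(X,S)=U(X,\conv S)$), and after normalizing $u(z_0)=0$ at a fixed $z_0$ (justified by $\mu(X)=0$ and closure of $U(X,S)$ under constants), $U(X,S)$ is uniformly bounded and equi-Lipschitz on the compact cube $X$, hence compact in the uniform topology by Arzelà--Ascoli; linearity of $L$ in each argument then brings Sion's minimax theorem into scope.

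The main technical obstacle is attainment of the infimum and a rigorous justification of the minimax swap in this infinite-dimensional setting. For attainment, an a priori mass bound on near-optimal $\gamma$ (from the finiteness of the primal value, which uses compactness of $X$ and $S$ together with the finite total variation of $\mu$) confines the search to a weak-$*$ compact ball of positive Radon measures (Banach--Alaoglu); the objective is weak-$*$ continuous since $\ell_S \in C(X\times X)$, and the feasible set is weak-$*$ closed as an intersection of the closed half-spaces $\int u\,d(\gamma_1-\gamma_2)\geq \int u\,d\mu$ indexed by $u \in U(X,S)$. The remaining delicate step --- that the inner $\sup_u L(u,\gamma)$ blows up to $+\infty$ outside the feasible set (which relies on $U(X,S)$ containing every affine function $s\cdot x + c$ for $s \in S$, $c \in \reals$, so that constant shifts force $\gamma_1(X)=\gamma_2(X)$ and slopes then force the order inequality) --- follows \citet{\third} with $U(X,S)$ and $\ell_S$ in place of their convex non-decreasing class and coordinate-wise cost.
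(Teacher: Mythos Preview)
Your weak-duality argument matches the paper's, but your strong-duality argument has a genuine gap. The key problem is that $U(X,S)$ is \emph{not} a cone: if $u\in U(X,S)$ then $\lambda u$ has subgradients in $\lambda S$, not in $S$, so $\lambda u\notin U(X,S)$ in general. Consequently your ``blow-up'' step fails. Concretely, suppose $\gamma_1(X)=\gamma_2(X)$ (which your affine-constant argument does force, using $\mu(X)=0$) but $\gamma_1-\gamma_2\not\succeq_{X,S}\mu$. Then there is some $u^\star\in U(X,S)$ with $\int u^\star\,d(\mu-\gamma_1+\gamma_2)>0$, yet you cannot rescale $u^\star$ to drive $\sup_u L(u,\gamma)$ to $+\infty$; the sup is merely some finite positive quantity. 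So $\inf_\gamma\sup_u L$ does \emph{not} equal $\min_{\gamma_1-\gamma_2\succeq_{X,S}\mu}\int\ell_S\,d\gamma$. Relatedly, there is a tension between your two uses of $U(X,S)$: you invoke Arzel\`a--Ascoli compactness of the normalized class to justify Sion, but then need unboundedness of the class to obtain the blow-up; you cannot have both.

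The paper confronts exactly this obstacle and resolves it by a detour. First, it passes to the cone $U^\circ(X,S)$ (indeed to $\clo(\conv(U^\circ(X,S)))$) and runs Fenchel--Rockafellar duality with \emph{two} functions $\phi,\psi$ against $\mu_+,\mu_-$ separately, yielding strong duality against the \emph{tighter} constraint set $\gamma_1\succeq_{X,S}\mu_+$ and $\mu_-\succeq_{X,S}\gamma_2$. Working with the cone is precisely what makes the indicator of the order constraint emerge as a conjugate. Second, a separate lemma shows that the two-function sup over the cone collapses to the single-function sup over $U(X,S)$: from feasible $(\phi,\psi)$ one builds $\bar\psi(y)=\sup_x \phi(x)-\ell_S(x,y)$ and checks directly that $\bar\psi\in U(X,S)$ and that one may take $\bar\phi=\bar\psi$. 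Finally, weak duality sandwiches the looser-constraint $\inf$ between the primal sup and the tighter-constraint $\min$, giving equality and attainment. Your Sion-based shortcut skips both the cone passage and the $(\phi,\psi)\mapsto u$ reduction, and without them the argument does not close.
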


Note that it is part of the theorem that the $\min$ is realized.  While we do not prove that the $\sup$ is realized, it is realized whenever an optimal mechanism exists.  Theorem 2 of \citet{\third} is a special case of theorem with $S= [0,1]^n$.  They directly prove that the $\sup$ is realized in this case.

The structure of our proof parallels that of \citet{\third}, which in turn parallels the structure of the proof of Monge-Kantorovich duality for optimal transport given by~\citet{villani2003topics}.  Just as the technical details of their proof differ from the proof of Monge-Kantorovich duality due to an added convexity constraint, so to do ours differ from theirs due to an added subgradient constraint.

To prove the theorem, we use three lemmas.  We begin with a simple proof of weak duality.  Note that weak duality alone is enough to apply our framework to practical examples.  The remainder of our results are only needed to show that in fact our framework always applies.
\begin{lemma}
\label{lem:weak}
\[\sup_{u \in U(X,S)} \int_X ud\mu \leq \inf_{\substack{\gamma \in Radon_+(X \times X)\\ \gamma_1 - \gamma_2 \succeq_{X,S} \mu}} \int_{X \times X} \ell_S(x,y)d\gamma(x,y)~.\]
\end{lemma}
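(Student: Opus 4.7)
The plan is to prove weak duality by a direct chain of three inequalities/equalities that connects the primal objective $\int_X u\,d\mu$ to the dual objective $\int_{X \times X} \ell_S(x,y)\,d\gamma(x,y)$ through the marginals of $\gamma$. Fix an arbitrary $u \in U(X,S)$ and an arbitrary feasible $\gamma \in Radon_+(X \times X)$ with $\gamma_1 - \gamma_2 \succeq_{X,S} \mu$. It suffices to show $\int_X u\,d\mu \leq \int_{X \times X} \ell_S(x,y)\,d\gamma(x,y)$, since taking the supremum in $u$ and the infimum in $\gamma$ then yields the lemma.

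First I would invoke the stochastic order constraint: since $u \in U(X,S)$ and $\gamma_1 - \gamma_2 \succeq_{X,S} \mu$, the definition of $\succeq_{X,S}$ immediately gives
\[\int_X u\,d\mu \;\leq\; \int_X u\,d(\gamma_1 - \gamma_2).\]
Next, I would rewrite the right-hand side in terms of the joint measure $\gamma$ by unfolding the definitions of the marginals:
\[\int_X u\,d\gamma_1 \;=\; \int_{X \times X} u(x)\,d\gamma(x,y), \qquad \int_X u\,d\gamma_2 \;=\; \int_{X \times X} u(y)\,d\gamma(x,y),\]
so that
\[\int_X u\,d(\gamma_1 - \gamma_2) \;=\; \int_{X \times X} \bigl(u(x) - u(y)\bigr)\,d\gamma(x,y).\]

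Finally, I would apply the pointwise subgradient bound recalled in the preliminaries: because $u$ has subgradients in $S$, for every $x,y \in X$ we have $u(x) - u(y) \leq \sup_{s \in S} s \cdot (x-y) = \ell_S(x,y)$. Since $\gamma$ is a \emph{positive} measure, we may integrate this pointwise inequality to obtain
\[\int_{X \times X} \bigl(u(x) - u(y)\bigr)\,d\gamma(x,y) \;\leq\; \int_{X \times X} \ell_S(x,y)\,d\gamma(x,y),\]
and chaining the three steps together yields the desired inequality.

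There is no real obstacle here; the only things to double-check are measurability and integrability, which follow because $S$ is closed and bounded (so $\ell_S$ is a finite continuous function on the compact set $X \times X$) and $u$, being convex on the compact set $X$ with subgradients in the bounded set $S$, is bounded and continuous. The positivity of $\gamma$ is essential in the last step, since it is what allows the pointwise inequality $u(x)-u(y) \leq \ell_S(x,y)$ to be integrated while preserving its direction.
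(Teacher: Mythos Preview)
Your proof is correct and follows essentially the same approach as the paper: fix feasible $u$ and $\gamma$, use the stochastic order to pass from $\mu$ to $\gamma_1-\gamma_2$, rewrite via the marginals as $\int_{X\times X}(u(x)-u(y))\,d\gamma$, and then apply the pointwise bound $u(x)-u(y)\leq \ell_S(x,y)$ together with positivity of $\gamma$. Your added remarks on integrability and the role of $\gamma\geq 0$ are accurate but not needed beyond what the paper states.
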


\begin{proof}
Fix feasible $u$ and $\gamma$.
\begin{align*}
\int_X u d\mu &\leq \int_X u d(\gamma_1-\gamma_2)\\
&= \int_{X \times X} u(x) - u(y) d\gamma(x,y)\\
&\leq \int_{X \times X}\ell_S(x,y)d\gamma(x,y)~.
\end{align*}
The first inequality follows by domination, the second by $\ell$ being an upper bound on the difference in the value of $u$ between any two points.
\end{proof}

The next lemma is the most technical, and shows strong duality for a related problem.  Ultimately we will combine this with weak duality to show strong duality for our desired problem.
\begin{lemma}
\label{lem:main}
$$\sup_{\substack{\phi(x)-\psi(y) \leq \ell_S(x,y)\\\phi,\psi \in \clo(conv(U^\circ(X,S)))}} \int_X \phi d\mu_+ - \int_X \psi d\mu_- = \min_{\substack{\gamma \in Radon_+(X \times X)\\ \gamma_1 \succeq_{X,S} \mu_+\\ \mu_- \succeq_{X,S} \gamma_2}} \int_{X \times X} \ell_S(x,y)d\gamma(x,y)$$
\end{lemma}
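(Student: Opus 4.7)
The plan is to follow the Fenchel-Rockafellar route that underlies \citet{\third}'s strong duality proof, with the closed convex cone $K := \clo(\conv(U^\circ(X,S)))$ playing the role of the cone of all non-decreasing convex functions used there. On the Banach space $E = C(X) \times C(X)$, I would introduce the proper convex functionals
\[
F(\phi,\psi) \;=\; -\int_X \phi\, d\mu_+ + \int_X \psi\, d\mu_-\ \text{if}\ \phi,\psi \in K,\ \text{else}\ +\infty,
\]
\[
G(\phi,\psi) \;=\; 0\ \text{if}\ \phi(x) - \psi(y) \leq \ell_S(x,y)\ \forall x,y,\ \text{else}\ +\infty,
\]
so that the left-hand side of the lemma equals $-\inf_E (F+G)$. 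Assuming $0 \in S$ (so that constants lie in $K$; otherwise a harmless translation of $S$ makes this hold), the pair formed by a small constant $\phi$ and a large constant $\psi$ witnesses the qualification that $G$ is continuous at a point of $\dom F$, and Fenchel-Rockafellar yields
\[
\inf_E (F+G) \;=\; \max_{(\alpha,\beta) \in Radon(X)^2} \bigl[-F^*(-\alpha,-\beta) - G^*(\alpha,\beta)\bigr],
\]
with the maximum attained.

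The heart of the argument is the computation of the two conjugates. Because $K$ is a closed convex cone, $F^*(-\alpha,-\beta) = 0$ precisely when $\int u\, d(\alpha-\mu_+) \geq 0$ and $\int u\, d(-\beta-\mu_-) \geq 0$ for every $u \in K$, and is $+\infty$ otherwise. Since $K$ is built from $U(X,S)$ by positive scaling, convex combination, and closure---all of which preserve the defining integral inequalities of the order $\succeq_{X,S}$---these two conditions translate to $\alpha \succeq_{X,S} \mu_+$ and $\mu_- \succeq_{X,S} -\beta$. For $G^*$, the substitution $\tilde\psi = -\psi$ rewrites the constraint in Kantorovich form $\phi(x) + \tilde\psi(y) \leq \ell_S(x,y)$, and Kantorovich-Rubinstein duality then identifies $G^*(\alpha,\beta)$ with $\int \ell_S\, d\gamma$ over $\gamma \in Radon_+(X \times X)$ having marginals $\gamma_1 = \alpha$ and $\gamma_2 = -\beta$ (and $+\infty$ if no such non-negative $\gamma$ exists, which also forces $\alpha$ and $-\beta$ to be non-negative with matching mass). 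Relabeling $\alpha = \gamma_1$ and $-\beta = \gamma_2$ collapses the nested optimization into exactly the form claimed on the right-hand side of the lemma; attainment of the minimum transfers from the max in Fenchel-Rockafellar, and can also be re-derived from weak-$*$ lower semicontinuity of $\gamma \mapsto \int \ell_S\, d\gamma$ together with a tightness bound on $\gamma(X \times X)$ obtained by testing the stochastic orders against the constant function in $K$.

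The main obstacle is the $F^*$ computation: without access to the rich class of all non-decreasing convex functions used by \citet{\third}, we must argue by hand that the polar-cone condition on $F$ is really equivalent to $\succeq_{X,S}$---i.e.\ that $K$ and $U(X,S)$ generate the same integral stochastic order. A secondary subtlety is that the Kantorovich-type computation for $G^*$ presumes the dual measures are non-negative with matching total mass; verifying both in our setting relies on $K$ containing constants (hence on $0 \in S$) to pin down sign and mass via the stochastic orders, and on the hypothesis $\mu(X) = 0$ (so $\mu_+(X) = \mu_-(X)$) to ensure a feasible $\gamma$ exists in the first place.
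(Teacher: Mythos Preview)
Your approach is correct in outline and leads to the same conclusion, but it is a genuinely different Fenchel--Rockafellar set-up from the paper's. The paper works on $E = CB(X\times X)$ with dual $Radon(X\times X)$, so that the coupling $\gamma$ appears directly as the dual variable: it packages the $\ell_S$ bound and non-negativity of $\gamma$ into $\Theta^*$, and the stochastic-order constraints together with the separability $f(x,y)=\phi(x)-\psi(y)$ into $\Xi^*$, then computes $\Theta^{**}$ and guesses-and-verifies $\Xi^{**}$. You instead work on $E = C(X)\times C(X)$ with dual $Radon(X)^2$, split the objective-plus-cone into $F$ and the $\ell_S$ constraint into $G$, and then recover $\gamma$ only after invoking Kantorovich duality as a black box inside the computation of $G^*$. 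The paper's route is self-contained (no appeal to Kantorovich) and obtains $\gamma$ in a single step; your route is more modular and makes the optimal-transport connection explicit, at the cost of a second duality inside the first and of having to argue by hand that $G^*(\alpha,\beta)=+\infty$ whenever $\alpha$ or $-\beta$ fails to be a non-negative measure of the right total mass.

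Two small clean-ups: there is a sign slip in your $F^*$ display (the second condition should read $\int u\,d(\mu_- -(-\beta))\ge 0$, which is what actually yields $\mu_-\succeq_{X,S} -\beta$), and the qualification point does not really require $0\in S$ or a translation of $S$: for any $u\in U(X,S)$ the pair $(\phi,\psi)=(u,u+C)$ lies in $K\times K$ and satisfies $\phi(x)-\psi(y)\le \ell_S(x,y)-C$, so taking $C$ large gives interior feasibility for $G$; this is exactly the analogue of the paper's choice $f(x,y)=-1$. Your remark that $\mu(X)=0$ is needed for the lemma itself is also too strong: the equality and attainment go through regardless, that hypothesis is only used later when assembling Theorem~\ref{thm:strong}.
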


\begin{proof}
The proof makes use of Fenchel-Rockafellar Duality (specifically in its symmetric form~\cite[Exercise 4.4.17]{borwein2010convex}, although we use the version of the conditions for strong duality given by~\cite{villani2003topics}).
\raf{This theorem being here makes the proof hard to parse, and gets me every time, so I \texttt{quote}d it}
\begin{quote}
  \noindent\textsc{Theorem (Fenchel-Rockafellar Duality).}
    Let $E$ be a Banach space, $E^*$ its dual, and proper $\Theta,\Xi:
    E \to\extreals$ with $\Theta$ convex and $\Xi$ concave.  Then
$$ \sup (\Xi - \Theta) \leq \inf (\Theta^* - \Xi^*),$$
where $\Theta^*$ and $\Xi^*$ are the convex and concave duals
respectively.  Further, if there exists $e \in E$ where $\Theta(e)$
and $\Xi(e)$ are finite and $\Theta$ is continuous at $e$, this holds
with equality and the infimum is realized.
\end{quote}
Our proof also makes use of the Fenchel-Moreau theorem, which says that a proper convex $f = f^{**}$ iff $f$ is lsc (equivalently closed).

Let $CB(X \times X)$ be the set of continuous, bounded functions $X \times X \rightarrow \mathbb{R}$.   Note its dual space is $Radon(X \times X)$.  To make use of the theorem, we can split the domain constraint into two pieces and phrase the optimization problem on the right hand side as
$\inf_\gamma \Theta^*(\gamma) - \Xi^*(\gamma)$, where

$$\Theta^*(\gamma) = \begin{cases}\int_{X \times X} \ell_S(x,y)d\gamma(x,y) &\mbox{if } \gamma \in Radon_+(X \times X)\\ +\infty &\mbox{otherwise}\end{cases}$$
and
$$\Xi^*(\gamma) = \begin{cases} 0 &\mbox{if } \gamma_1 \succeq_{X,S} \mu_+ \mbox{ and } \mu_- \succeq_{X,S} \gamma_2 \\ -\infty &\mbox{otherwise}\end{cases}$$

Note that we use a $*$ because these functions are on the dual space.  $\Theta^*$ is closed and convex as it is linear on its domain, which is a closed convex set.  Similarly, $\Xi^*$ is closed and concave as its domain is a closed convex set.  Thus, by Fenchel-Moreau we have that $\Theta^* = \Theta^{***}$ and $\Xi^* = \Xi^{***}$.

We can calculate $\Theta^{**}$ as
\begin{align*}
\Theta^{**}(f) &= \sup_{\gamma \in Radon_+(X \times X)} \int_{X \times X} (f(x,y) - \ell_S(x,y))d\gamma(x,y)\\
&= \begin{cases} 0 &\mbox{if } \ell_S \geq f\\ +\infty &\mbox{otherwise}\end{cases}
\end{align*}
For the second equality, note that in the first case the difference is non-positive at every point and $\gamma$ is non-negative, so the the supremum is realized by taking $\gamma = 0$.  In the second case, take any point where the difference is positive and choose a $\gamma$ that puts an arbitrarily high measure on that point and 0 measure elsewhere.

Note that $(\Theta^{**})^*$ = $\Theta^*$ and $\Theta^{**}$  is closed and convex.  Thus we can take $\Theta = \Theta^{**}$ and its dual and double dual will be $\Theta^*$ and $\Theta^{**}$ respectively.

For $\Xi^{**}$, we proceed by guessing a closed concave $\Xi^{**}$ and confirming that its dual is in fact $\Xi^*$.  Take $$\Xi^{**}(f) = \begin{cases} \int_X \phi d(\mu_+) - \int_X \psi d(\mu_-)&\mbox{if } \exists \phi,\psi\in \clo(\conv(U^\circ(X,S))):\; f(x,y) = \phi(x) - \psi(y)\\ -\infty &\mbox{otherwise,}\end{cases}$$ 
\begin{align*}
(\Xi^{**})^*(\gamma) &= \inf_{f \in CB{X \times X}} \int_{X \times X} f(x,y)d(\gamma(x,y)) - \Xi^{**}(f)\\
&= \inf_{\phi,\psi \in \clo(\conv(U^\circ(X,S)))} \int_X \phi d(\gamma_1 - \mu_+) + \int_X \psi d(\mu_- - \gamma_2)\\
&= \begin{cases} 0 &\mbox{if } \gamma_1 \succeq_{X,S} \mu_+ \mbox{ and } \mu_- \succeq_{X,S} \gamma_2\\ -\infty &\mbox{otherwise}\end{cases}\\
&= \Xi^*(\gamma)
\end{align*}
The penultimate equality follows from the following observation.  If $\gamma_1\succeq_{X,S} \mu_+$, the integral is always non-negative.  Taking $\phi = 0$ makes it 0, and similarly for $\phi$.  If $\gamma_1 \not\succeq_{X,S} \mu_+$, then there is some $\phi \in U(X,S)$ for which the integral is negative.  As we are allowed to scale it arbitrarily in $U^\circ(X,S)$, the infimum is infinite (and similarly for $\psi)$.
Since $\Xi^{**}$ is closed and concave, we can again take $\Xi = \Xi^{**}$.

The final detail to apply the theorem is to find an $f$ such that $\Theta(f)$ and $\Xi(f)$ are finite and $\Theta$ is continuous at $f$.
Take $f(x,y) = - 1$.  $\Theta(f) = 0$ and $\Xi(f) = - \int_X d\mu_-$.  $\Theta(f') = 0$ for all $f'$ such that $||f-f'||_\infty \leq 1$.  Thus, $\Theta$ is continuous at $f$.

We now apply Fenchel-Rockafellar duality, giving us the following:
\begin{align*}
\sup_{f \in CB(X \times X)} \Xi(f) - \Theta(f) &= \min_{\gamma \in Radon(X \times X)} \Theta^*(\gamma) - \Xi^*(\gamma)\\
\sup_{\substack{\phi(x)-\psi(y) \leq \ell_S(x,y)\\\phi,\psi \in \clo(\conv(U^\circ(X,S)))}} \int_X \phi d\mu_+ - \int_X \psi d\mu_- &= \min_{\substack{\gamma \in Radon_+(X \times X)\\ \gamma_1 \succeq_{X,S} \mu_+\\ \mu_- \succeq_{X,S} \gamma_2}} \int_{X \times X} \ell_S(x,y)d\gamma(x,y)
\end{align*}
\end{proof}

The final lemma takes the supremum over two functions for which we proved strong duality in the previous lemma, and shows that we can convert that not just a single function, but one in $U(X,S)$ as well.
\begin{lemma}
\label{lem:21}
\begin{equation}
\label{eqn:21}
\sup_{\substack{\phi(x)-\psi(y) \leq \ell_S(x,y)\\\phi,\psi \in \clo(\conv(U^\circ(X,S)))}} \int_X \phi d\mu_+ - \int_X \psi d\mu_-= \sup_{u \in U(X,S)} \int_X ud\mu
\end{equation}
\end{lemma}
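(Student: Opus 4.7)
The plan is to prove both inequalities separately.

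The direction $\sup_{u \in U(X,S)} \int u\,d\mu \leq \sup_{(\phi,\psi)} \bigl(\int \phi\,d\mu_+ - \int \psi\,d\mu_-\bigr)$ is immediate: for any $u \in U(X,S) \subseteq \clo(\conv(U^\circ(X,S)))$, take $\phi = \psi = u$. The constraint $\phi(x) - \psi(y) = u(x) - u(y) \leq \ell_S(x,y)$ holds by definition of $U(X,S)$, and the objective reduces to $\int u\,d\mu_+ - \int u\,d\mu_- = \int u\,d\mu$.

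For the reverse direction, the plan is that, given any feasible pair $(\phi, \psi)$, I construct a candidate mechanism via the $c$-transform
\[ u(x) \;:=\; \inf_{y \in X}\,\bigl\{\psi(y) + \ell_S(x,y)\bigr\}, \]
which is well-defined since $X$ is compact and $\psi,\ell_S$ are bounded. The feasibility constraint $\phi(x) \leq \psi(y) + \ell_S(x,y)$ for all $y$ immediately gives $\phi \leq u$ pointwise, while setting $y=x$ and using $\ell_S(x,x)=0$ gives $u \leq \psi$ pointwise. Since $\mu_\pm$ are non-negative, this sandwich produces
\[ \int_X u\,d\mu \;=\; \int_X u\,d\mu_+ - \int_X u\,d\mu_- \;\geq\; \int_X \phi\,d\mu_+ - \int_X \psi\,d\mu_-, \]
so taking the supremum over feasible $(\phi,\psi)$ on the right will close the inequality, provided $u \in U(X,S)$.

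The remaining work is to show $u \in U(X,S)$. Convexity of $u$ is standard: $\psi$ is convex (as an element of $\clo(\conv(U^\circ(X,S)))$, a closed convex cone of convex functions), and $\ell_S$ is jointly convex in $(x,y)$ as a supremum of linear functions, so the marginal infimum over $y$ is convex in $x$. The slope bound $u(x) - u(y') \leq \ell_S(x,y')$ follows from the triangle-type inequality $\ell_S(x,y^*) \leq \ell_S(x,y') + \ell_S(y',y^*)$, which is immediate upon writing $s \cdot (x - y^*) = s \cdot (x - y') + s \cdot (y' - y^*)$ and taking suprema. To upgrade the slope bound to the explicit representation $u(x) = \sup_{s \in S}(s \cdot x + c_s)$ required by $U(X,S)$, I plan to swap the inf and sup:
\[ u(x) \;=\; \inf_{y}\sup_{s \in S}\bigl[\psi(y) + s \cdot (x-y)\bigr] \;=\; \sup_{s \in S}\bigl(s \cdot x - \psi^*(s)\bigr), \]
with $\psi^*$ the convex conjugate of $\psi$, via Sion's minimax theorem.

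\textbf{Main obstacle.} The hard part is justifying the minimax swap while keeping slopes in $S$ rather than $\conv S$: Sion's theorem naturally applies after invoking $\ell_S = \ell_{\conv S}$ and yields a representation with slopes in $\conv S$, but the footnote's definition of $U(X,S)$ demands support by $S$ alone. I expect this is resolved by exploiting that $\psi \in \clo(\conv(U^\circ(X,S)))$ already decomposes into pieces supported by $S$, so that the conjugate $\psi^*$ is effectively determined by its values on $S$; for convex $S$ the difficulty does not arise. Once $u \in U(X,S)$ is established, the sandwich above and suprema over feasible $(\phi,\psi)$ close the argument.
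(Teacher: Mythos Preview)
Your overall strategy---the easy direction via $\phi=\psi=u$, the hard direction via a $c$-transform and the sandwich $\phi\le u\le\psi$---is the same as the paper's. The obstacle you flag, however, is genuine and your proposed resolutions do not close it. Take $X=[0,1]$, $S=\{0,1\}$, and $\psi(y)=y/2$, which lies in $\conv(U^\circ(X,S))$ as the midpoint of $0$ and the identity. Then $\ell_S(x,y)=(x-y)_+$, and your $u(x)=\inf_y[\psi(y)+\ell_S(x,y)]=x/2$, whose only subgradient $1/2$ is not in $S$. Sion's theorem only delivers slopes in $\conv S$, and the hope that a decomposition of $\psi$ into $S$-supported pieces propagates through the infimal convolution is unfounded: this example shows it does not. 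So as it stands your construction does not produce an element of $U(X,S)$, and the sandwich inequality cannot be fed into the right-hand supremum.

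The paper's execution differs in two places. First, it performs a \emph{double} $c$-transform: $\bar\psi(y)=\sup_x[\phi(x)-\ell_S(x,y)]\le\psi(y)$, then $\bar\phi(x)=\inf_y[\ell_S(x,y)+\bar\psi(y)]\ge\phi(x)$, and finally shows $\bar\phi=\bar\psi$ (the equality uses $\bar\psi\in U(X,S)$, so that $\bar\psi(x)-\bar\psi(y)\le\ell_S(x,y)$). Second---and this is where it avoids your Sion detour---it argues $\bar\psi\in U(X,S)$ not via minimax but by rewriting $\bar\psi(y)=\max_{x\in X,\,s\in S}[\phi(x)-s\cdot(x-y)]$ and taking the maximizer $s_y\in S$ directly as a subgradient at $y$, verifying the subgradient inequality by plugging $(x_y,s_y)$ into the max at any other $y'$. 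You should be aware, though, that this rewriting is not literally the same function as $\sup_x[\phi(x)-\ell_S(x,y)]$ for nonconvex $S$ (run the same one-dimensional example with $\phi(x)=x/2$: the first gives $y/2$, the second gives $\max(1/2,y)$), so when you reconcile your argument with the paper's you will want to scrutinize this step carefully; for convex $S$ the issue disappears since then $U(X,S)=U(X,\conv S)$ and either route goes through.
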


\begin{proof}
First, observe that for any $u$ feasible on the right hand side of \eqref{eqn:21}, $\phi = \psi = u$ is feasible on the left hand side with the same value.  Thus, the left hand side is at least the right hand side.  

Now consider $\phi$ and $\psi$ feasible for the left hand side.  By the constraints, $\psi(y) \geq \phi(x) - \ell_S(x,y)$ for all $x \in X$.  Thus, define $\bar\psi(y) = \sup_x \phi(x) - \ell_S(x,y)$.  This is finite on $X$ as $\bar\psi(y) \leq \psi(y)$.  We claim that $\bar\psi \in U(X,S)$.  To show this, it suffices to find a subgradient $s_y \in S$ of $\bar\psi$ at each $y \in X$.  Observe that since $X$ is compact, $S$ is closed, and $\phi$ is continuous, we can write $\bar\psi(y) = \max_{x \in X, s \in S} \phi(x) - s \cdot (x-y)$.  Let $s_y$ and $x_y$ denote the maximizers for $y$.  Then for all $y' \in X$,
\begin{align*}
\bar\psi(y')
& = \max_{x \in X, s \in S} \phi(x) - s \cdot (x-y')\\
& \geq \phi(x_y)- s_y \cdot (x_y-y')\\
& =  \phi(x_y)- s_y\cdot (x_y-y) - s_y \cdot (y-y')\\
& =  \bar\psi(y) - s_y \cdot (y-y')\\
& =  \bar\psi(y) + s_y \cdot (y'-y)
\end{align*}
Thus, $s_y$ is indeed a subgradient of $\bar\psi$ at $y$.

Since $\phi$ and $\bar\psi$ are feasible for the left hand side of \eqref{eqn:21}, we know that
$\phi(x) \leq \ell_S(x,y) +\bar\psi(y)$ for all $y \in X$.  Thus, we can define
$\bar\phi(x) = \inf_y \ell_S(x,y) + \bar\psi(y)$ and observe that $\bar\phi(x) \geq \phi(x)$.  Further, 
$\int_X \phi d\mu_+ - \int_X \psi d\mu_- \leq \int_X \bar\phi d\mu_+ - \int_X \bar\psi d\mu_-$.

We claim that $\bar\phi = \bar\psi$.  For one direction,
$\bar\phi(x) \leq \ell_S(x,x) + \bar\psi(x) = \bar\psi(x)$.
For the other,
$$\bar\phi(x) = \inf_y \ell_S(x,y) + \bar\psi(y) = \bar\psi(x) + \inf_y \ell_S(x,y) + \bar\psi(y) - \bar\psi(x) \geq \bar\psi(x).$$
The last inequality follows because $\bar\psi \in U(X,S)$ implies $\bar\psi(x) - \bar\psi(y) \leq \ell_S(x,y)$.
Thus taking $u = \bar\psi (= \bar\phi)$ is feasible for the right hand side and the value of the right hand side is at least that of the left hand side.
\raf{There is some scoring rule / general truthfulness stuff going on in this construction, but I forgot to tell you about it (maybe you already saw it), and probably will again...}
\end{proof}

The proof of the theorem then results by combining the lemmas.

\begin{proof}[of Theorem~\ref{thm:strong}]
\begin{align*}
\inf_{\substack{\gamma \in Radon_+(X \times X)\\ \gamma_1 - \gamma_2 \succeq_{X,S} \mu}} \int_{X \times X} \ell_S(x,y)d\gamma(x,y)
& \geq \sup_{u \in U(X,S)} \int_X ud\mu\\
& = \sup_{\substack{\phi(x)-\psi(y) \leq \ell_S(x,y)\\\phi,\psi \in \clo(\conv(U^\circ(X,S)))}} \int_X \phi d\mu_+ - \int_X \psi d\mu_-\\
& = \min_{\substack{\gamma \in Radon_+(X \times X)\\ \gamma_1 \succeq_{X,S} \mu_+\\ \mu_- \succeq_{X,S} \gamma_2}} \int_{X \times X} \ell_S(x,y)d\gamma(x,y)
\end{align*}
A feasible $\gamma$ for the final $\min$ is also feasible for the initial $\inf$, so the inequality is in fact an equality and the initial $\inf$ is a $\min$.
\end{proof}

\section{Concrete Examples}

In this section, we derive optimal mechanisms for a number of settings where previous duality results do not apply.
Since an optimal mechanism exists for all these settings, Theorem~\ref{thm:strong} guarantees that there exist a $u^*$ and $\gamma^*$ that lead to equality in equation~\eqref{eqn:strong}.  In the following examples, however, we need only apply the framework to find such a pair, as then Lemma~\ref{lem:weak} will certify optimality of $u^*$.  To aid us in finding such a pair we make use of the following powerful corollary of Lemma~\ref{lem:weak} that is a direct generalization from \cite{\third}, who describe it as providing complementary slackness conditions.

\begin{corollary}
\label{cor:structure}
Let $u^*$ and $\gamma^*$ be optimal in \eqref{eqn:strong}.  Then:
\begin{enumerate}
\item $\int_X u^* d\mu = \int_X u^* d(\gamma_1^* - \gamma_2^*)$ and 
\item $\ell_S(x,y) = u^*(x) - u^*(y)$ $\gamma^*(x,y)$-almost everywhere.
\end{enumerate}
\end{corollary}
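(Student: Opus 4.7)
The plan is to revisit the weak duality chain in the proof of Lemma~\ref{lem:weak} and observe that, at an optimum, every inequality in that chain must collapse to an equality. By Theorem~\ref{thm:strong} we know that the $\sup$ on the left of \eqref{eqn:strong} and the $\min$ on the right are finite and equal, and by hypothesis $u^*$ and $\gamma^*$ realize them. So for this pair, the chain
\begin{align*}
\int_X u^* \, d\mu
  &\leq \int_X u^* \, d(\gamma^*_1 - \gamma^*_2) \\
  &= \int_{X \times X} \bigl(u^*(x) - u^*(y)\bigr) \, d\gamma^*(x,y) \\
  &\leq \int_{X \times X} \ell_S(x,y) \, d\gamma^*(x,y)
\end{align*}
must collapse: the first and last quantities coincide by strong duality, so both of the inequalities above are in fact equalities.

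Equality of the first two terms is precisely statement (1) of the corollary, so nothing further is required there. For statement (2), the equality between the third and fourth terms gives
\[
\int_{X \times X} \bigl(\ell_S(x,y) - (u^*(x) - u^*(y))\bigr) \, d\gamma^*(x,y) = 0.
\]
The integrand is pointwise non-negative, since $u^* \in U(X,S)$ implies $u^*(x) - u^*(y) \leq \ell_S(x,y)$ for all $x,y \in X$ (this is exactly the bound invoked at the end of the proof of Lemma~\ref{lem:weak}), and $\gamma^*$ is a non-negative Radon measure. Hence the integrand must vanish $\gamma^*$-almost everywhere, which is statement (2).

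There is no real obstacle: the argument is the standard complementary-slackness reading of an LP duality, where tightness of the optimal primal--dual pair forces each slack term to be zero on the support of the relevant measure. The only mild subtlety is invoking that a non-negative measurable function with zero integral against a non-negative Radon measure must vanish $\gamma^*$-a.e., which is immediate from the definition of the integral.
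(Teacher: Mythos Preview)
Your proof is correct and follows essentially the same approach as the paper: observe that strong duality forces both inequalities in the weak duality chain of Lemma~\ref{lem:weak} to be equalities, yielding (1) directly and (2) from the non-negativity of $\ell_S(x,y) - (u^*(x)-u^*(y))$ against the non-negative measure $\gamma^*$. The paper's own proof is just a terser statement of the same argument.
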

\begin{proof}
By strong duality, the two inequalities in the proof of Lemma~\ref{lem:weak} are equalities for $u^*$ and $\gamma^*$.  This gives the first claim directly, and the second follows because $u^*(x) - u^*(y) \leq \ell_S(x,y)$ for all $x$ and $y$.
\end{proof}

From this corollary, and some basic facts about subgradients from convex analysis, we can identify some additional structure linking $u^*$ and $\gamma^*$ that will guide our search.
Let $\exp(S) = \{s \in S ~|~ \exists x,y \in X \text{ s.t. } s\cdot(x-y) = \ell_S(x,y) \}$ denote the union of exposed faces of $S$ with respect to $X$.\footnote{Equivalently, the set of points in $S$ supported by (dual) points in $C = X-X$.}  The following lemma shows that, except for a set of measure zero, these are the only permissible subgradients of $u^*$.

\begin{lemma}
\label{lem:boundary}
Points $x$ and $y$ share a subgradient of $u^*$ $s \in \exp(S)$ $\gamma^*(x,y)$-almost everywhere.  Furthermore, $(x-y)$ is a direction in which $s$ is exposed.
\end{lemma}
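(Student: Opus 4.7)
The plan is to use Corollary~\ref{cor:structure}(2) as the starting point: $\ell_S(x,y) = u^*(x) - u^*(y)$ holds for $\gamma^*$-almost every $(x,y)$. Fix such a pair, and exhibit a single $s$ that simultaneously lies in $\exp(S)$, is a subgradient of $u^*$ at both $x$ and $y$, and is exposed by the direction $x-y$. The candidate is \emph{any} $s \in \partial u^*(x)$; such an $s$ exists and satisfies $s \in S$ directly from the definition of $U(X,S)$.

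The first step shows $s \cdot (x-y) = \ell_S(x,y)$, which immediately yields $s \in \exp(S)$ with $x-y$ as an exposing direction. The upper bound $s \cdot (x-y) \leq \ell_S(x,y)$ is free from $s \in S$. For the matching lower bound, rearrange the subgradient inequality $u^*(y) \geq u^*(x) + s \cdot (y-x)$ into $s \cdot (x-y) \geq u^*(x) - u^*(y)$ and substitute the assumed equality $u^*(x) - u^*(y) = \ell_S(x,y)$.

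The second step shows that the \emph{same} $s$ is a subgradient at $y$. This is a one-line substitution: for any $z \in X$ the subgradient inequality at $x$ reads $u^*(z) \geq u^*(x) + s \cdot (z-x)$, and replacing $u^*(x)$ by $u^*(y) + s \cdot (x-y)$ (an identity produced by the first step combined with Corollary~\ref{cor:structure}(2)) collapses this to $u^*(z) \geq u^*(y) + s \cdot (z-y)$, i.e.\ $s \in \partial u^*(y)$.

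There is no real obstacle: the content of the lemma is essentially the observation that the extremal condition $u^*(x) - u^*(y) = \ell_S(x,y)$ simultaneously forces every subgradient of $u^*$ at $x$ to lie on the exposed face of $S$ in direction $x-y$ and to extend verbatim to a subgradient at $y$. The only care needed is that this pointwise argument is applied on the $\gamma^*$-full-measure set supplied by Corollary~\ref{cor:structure}(2), so the conclusions hold $\gamma^*$-almost everywhere as stated.
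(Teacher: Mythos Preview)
Your proof is correct and follows the same overall arc as the paper's: invoke Corollary~\ref{cor:structure}(2), then on the resulting $\gamma^*$-full-measure set exhibit a shared subgradient achieving $\ell_S(x,y)$. The only difference is in how that shared subgradient is produced. The paper parameterizes the segment $f(\lambda)=u^*(\lambda x+(1-\lambda)y)$, argues informally that $f$ must be affine (``the steepest slope present could have been used the whole way''), extracts $s\in\partial u^*(y)$ from this affinity, and then cites an external lemma to propagate $s$ to $\partial u^*(x)$. You instead pick $s\in\partial u^*(x)$ directly, apply the subgradient inequality at $x$ with test point $y$ to force $s\cdot(x-y)\geq u^*(x)-u^*(y)=\ell_S(x,y)$, and then substitute $u^*(x)=u^*(y)+s\cdot(x-y)$ into the subgradient inequality at $x$ to obtain the inequality at $y$. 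Your route is slightly more elementary: it avoids the affinity-of-$f$ step and the external citation, trading them for two clean applications of the subgradient inequality. Both arguments prove exactly the same thing with the same dependence on Corollary~\ref{cor:structure}.
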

\begin{proof}
By Corollary~\ref{cor:structure}, $\ell_S(x,y) = u^*(x) - u^*(y)$ $\gamma^*(x,y)$-almost everywhere, so consider some $x$ and $y$ that are.  For $\lambda \in [0,1]$, let $f(\lambda) = u^*(\lambda x + (1-\lambda) y)$.  If $f$ is not affine, then $\ell_S(x,y) \neq u^*(x) - u^*(y)$, because the steepest slope present could have been used the whole way.  Thus, $f$ is affine and there is some $s \in S$ such that $s \in \partial u^*(y)$ (the set of subgradients of $u^*$ at $y$) and $s \cdot (x-y) = u^*(x) - u^*(y)$.  It is standard (see, e.g.,\cite[Lemma 2]{frongillo2014general}) that this implies $s \in \partial u^*(x)$.
Since  $\ell_S(x,y) = u^*(x) - u^*(y)$, $s \in \exp(S)$, and furthermore is maximal in the $(x-y)$ direction because that is a direction that shows it is in $\exp(S)$.
\end{proof}

Lemma~\ref{lem:boundary} significantly restricts the space of both optimal mechanisms $u^*$ and optimal duals $\gamma^*$.  For $u^*$, the only allocations are elements of $\exp(S)$.  For $\gamma^*$, when matching $x$ and $y$ which share allocation $s$, they much be chosen to be in a direction where $s$ is exposed.  For example, if $S = [0,1]^2$, when the allocation is $(1,0.5)$, $\gamma^*$ only matches pairs $x$ and $y$ where $x$ is to the right of $y$, while when the allocation is $(1,1)$ it is allowed to have $x$ be anything up and to the right of $y$.

\subsection{A Recipe for the Examples}

In each of the examples below, we follow the same basic recipe.  First, we state the measure $\mu$ for which the optimal mechanism maximizes $\int u^* d\mu$.
As discussed in Section~\ref{sec:objective}, determining $\mu$ is a direct application of the machinery of~\citet{\third}, so we simply give the measure without further comment.
For some examples, we then identify an alternate measure $\mu'$ and show that $\mu' \succeq_{X,S} \mu$.
The $\gamma^*$ we ultimately construct will have the property that $\gamma_1^* - \gamma_2^* = \mu$ or $\mu'$ as appropriate.

We then propose a mechanism $u^*$, which we will use to guide our construction of a matching $\gamma^*$ for which strong duality holds, certifying its optimality.  A useful construct in this regard will be the \emph{cell} of an allocation/subgradient $s\in S$, the set of types allocated $s$ under $u^*$, defined formally by $\cell(s) = \{x \in X ~|~ s \in \partial u^*(x)\}$, where $\partial u^*(x)$ denotes the set of subgradients of $u^*$ at $x$.
Lemma~\ref{lem:boundary} shows that $\gamma_1^*(\cell(s)) = \gamma_2^*(\cell(s)) = \gamma^*(\cell(s),\cell(s))$.  Thus, a necessary property of such a $u^*$ is that for each $s \in S$, $\mu(\cell(s)) = 0$, meaning that $\gamma^*$ essentially only matches types within the same cell.  This property, which we term the ``integrate to zero'' condition, guides our construction of $\gamma^*$ by allowing us to construct it on each cell independently.

As Corollary~\ref{cor:structure} tells us that $u^*(x)-u^*(y) = \ell_S(x,y)$ for almost every matched pair $(x,y)$, we must have $u^*(x)-u^*(y) = s \cdot (x-y)$ within a given $\cell(s)$; this lets us focus on pairs such that $s \in \argmax_{s'\in S} s'\cdot (x-y)$.  A matching which satisfies this cell condition will be said to ``respect $\ell_S$''.  Once we verify that $\mu(\cell(s)) = 0$ for $u^*$, we then show that such a $\gamma^*$ respecting $\ell_S$ exists, either by directly constructing it or by applying Strassen's~\citeyear{strassen1965existence} Theorem (or more precisely a special case of it~\citep{kamae1977stochastic,lindvall1999strassens}).

\begin{theorem}[Strassen's Theorem in 1 Dimension]
Let $P$ be a set of points on a line segment (and so totally ordered).   Let $\nu_+$ and $\nu_-$ be two unsigned measures on $P$ with $\nu_+(P) = \nu_-(P)$ such that $\nu_+ \succeq_{st} \nu_-$, where $\succeq_{st}$ denotes stochastic dominance.  Then there exists an unsigned measure $\gamma$ on $P \times P$ with marginals $\gamma_1 = \nu_+$ and $\gamma_2 = \nu_-$  such that $x > y$ $\gamma$-almost surely.
\end{theorem}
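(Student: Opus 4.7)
The plan is to construct $\gamma$ explicitly via the classical quantile coupling. The trivial case $\nu_+(P) = \nu_-(P) = 0$ is handled by $\gamma = 0$, so I would first scale out the common total mass to reduce to probability measures and rescale $\gamma$ at the end.

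Next I would set up the cumulative distribution functions $F_\pm(t) \defeq \nu_\pm(\{x \in P : x \leq t\})$ together with the generalized inverses $F_\pm^{-1}(u) \defeq \inf\{t \in P : F_\pm(t) \geq u\}$ for $u \in (0,1)$. The key observation, which can be obtained by testing the stochastic order against the increasing indicators $x \mapsto \ones[x > t]$, is that the hypothesis $\nu_+ \succeq_{st} \nu_-$ is equivalent to $F_+(t) \leq F_-(t)$ for every $t$, and this in turn is equivalent to the pointwise inequality $F_+^{-1}(u) \geq F_-^{-1}(u)$ for every $u \in (0,1)$.

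I would then define $\gamma$ as the pushforward of Lebesgue measure on $(0,1)$ along the map $u \mapsto (F_+^{-1}(u), F_-^{-1}(u))$. Two properties remain to verify. The marginal identities $\gamma_1 = \nu_+$ and $\gamma_2 = \nu_-$ follow from the standard fact that the pushforward of Lebesgue measure under a quantile function recovers the underlying distribution; the support condition $x \geq y$ $\gamma$-almost surely is immediate from $F_+^{-1}(u) \geq F_-^{-1}(u)$.

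The one subtlety I anticipate is the gap between the $x \geq y$ a.s.\ delivered by this coupling and the stated $x > y$ a.s.: strict inequality cannot hold in full generality (take $\nu_+ = \nu_-$), so I would read the stated conclusion as $x \geq y$ a.s. This is in any case what the subsequent arguments need, since matching pairs within the same $\cell(s)$ need only move mass in a direction compatible with $\ell_S$, which includes the diagonal.
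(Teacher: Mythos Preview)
Your quantile-coupling argument is correct and is in fact the standard elementary proof of this one-dimensional case of Strassen's theorem. The paper itself does not prove the statement: it is quoted as a known result with citations to \citet{strassen1965existence}, \citet{kamae1977stochastic}, and \citet{lindvall1999strassens}, and is then used as a black box inside Lemma~\ref{lem:1dmatch} and the examples. So there is no ``paper's own proof'' to compare against; you have simply supplied the classical construction that underlies the cited references.

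Your closing remark about $x>y$ versus $x\geq y$ is well taken. As stated the conclusion cannot hold strictly in general (take $\nu_+=\nu_-$), and the paper's downstream uses only need the weak inequality: within a fixed $\cell(s)$ one requires $u^*(x)-u^*(y)=\ell_S(x,y)$, and $x=y$ trivially satisfies this. Reading the conclusion as $x\geq y$ $\gamma$-a.s.\ is the right fix, and your coupling delivers exactly that.
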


Strassen's Theorem allows us to prove the existence of the desired $\gamma^*$ without needing to construct it directly.  In many of the examples that follow, the mass of $\mu^+$ is restricted to a 1-dimensional set $P$, but $\mu_-$ need not be.
Nonetheless, this one dimensional set has some total order $\geq_P$ such that if $x_1 \geq_P x_2$ then for all $y$ in the cell, if $u^*(x_1) - u^*(y) = \ell_S(x_1,y)$ then $u^*(x_1,y) = \ell_S(x_1,y)$.  Intuitively, this ordering means that if $\gamma^*$ is ``permitted'' to match $x_2$ and $y$ in this sense, it is also permitted to match $x_1$ and $y$.  Thus, $\geq_P$ captures which points in $P$ are least restricted in terms of the matching, effectively reducing to the 1-dimensional case.  This reduction is captured by the following lemma.

\begin{lemma}
\label{lem:1dmatch}
Let $X' \subset X$, e.g. $X'=\cell(s)$, and $\nu_+$ and $\nu_-$ be two unsigned measures on $X'$ with $\nu_+(X') = \nu_-(X')$.  For $x\in X'$ define $A(x) = \{y \in X' ~|~ u^*(x) - u^*(y) = \ell_S(x,y)\}$ to be the points in $X'$ which can be matched with $x$.  Let $P \subset X'$ with $\nu_+(P) = \nu_+(X')$, such that the subset relation on $A(x)$ induces a total order on $P$ defined by $x\geq_P x' \iff A(x') \subseteq A(x)$.  Letting $B(x) = \{ x' \in P ~|~ x \geq_P x'\}$, if $\nu_-(A(x)) \geq \nu_+(B(x))$ for all $x \in P$, there exists a measure $\gamma$ on $X' \times X'$ with marginals such that $\gamma_1 - \gamma_2 = \nu_+ - \nu_-$ and $u^*(x) - u^*(y) = \ell_S(x,y)$ $\gamma$-almost surely.
\end{lemma}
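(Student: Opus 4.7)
The plan is to reduce the problem to the one-dimensional Strassen theorem stated above. First, it suffices to construct a nonnegative measure $\gamma$ on $X' \times X'$ with $\gamma_1 = \nu_+$, $\gamma_2 = \nu_-$, and supported on $\{(x,y) : y \in A(x)\}$, since this stronger property implies both $\gamma_1 - \gamma_2 = \nu_+ - \nu_-$ and the almost-sure condition $u^*(x) - u^*(y) = \ell_S(x,y)$.

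The key observation is that for each $y \in X'$ the set $\{x \in P : y \in A(x)\}$ is upward closed in $(P, \geq_P)$: if $x \geq_P x'$ then $A(x') \subseteq A(x)$ by definition of $\geq_P$, so $y \in A(x')$ forces $y \in A(x)$. I would then define $\phi : X' \to P$ by $\phi(y) = \inf_{\geq_P}\{x \in P : y \in A(x)\}$ and let $\tilde\nu_- = \phi_*\nu_-$ be the pushforward, a nonnegative measure on $P$. Applying the hypothesis $\nu_-(A(x)) \geq \nu_+(B(x))$ along a cofinal sequence $x_k$ in $P$ (so $B(x_k) \uparrow P$) and using $\nu_+(P) = \nu_-(X')$ forces $\nu_-(A(x_k)) \to \nu_-(X')$, so $\phi$ is defined $\nu_-$-almost everywhere; from the upward closure we then get $\phi^{-1}(B(x)) = A(x)$ modulo a $\nu_-$-null set, and hence $\tilde\nu_-(B(x)) = \nu_-(A(x)) \geq \nu_+(B(x))$. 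Identifying $(P, \geq_P)$ with an ordered subset of $\mathbb{R}$, this is precisely the stochastic dominance $\nu_+ \succeq_{st} \tilde\nu_-$, with matching total masses.

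With this in hand, the 1-dimensional Strassen theorem produces a nonnegative measure $\tilde\gamma$ on $P \times P$ with $\tilde\gamma_1 = \nu_+$, $\tilde\gamma_2 = \tilde\nu_-$, and $x \geq_P z$ holding $\tilde\gamma$-almost surely. I then lift $\tilde\gamma$ to $X' \times X'$ by disintegrating $\nu_-$ along $\phi$, writing $\nu_- = \int_P (\nu_-)^z \, d\tilde\nu_-(z)$, and defining $\gamma$ by $d\gamma(x,y) = d\tilde\gamma(x,z)\, d(\nu_-)^z(y)$ where $z = \phi(y)$. A direct computation of marginals yields $\gamma_1 = \nu_+$ and $\gamma_2 = \nu_-$; and for $\tilde\gamma$-typical $(x,z)$ together with a $(\nu_-)^z$-typical $y$, one has $\phi(y) = z \leq_P x$, so upward closure gives $y \in A(\phi(y)) \subseteq A(x)$, establishing the support condition.

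The main technical obstacle is the measurability of $\phi$ and the assertion that the infimum in its definition is attained on a $\nu_-$-full set, so that $y \in A(\phi(y))$ rather than merely $y$ being a limit of points in the $A(x')$'s; this is paired with the need for a measurable disintegration $(\nu_-)^z$ of $\nu_-$ along $\phi$. In the settings where the lemma is used, $P$ is order-isomorphic to a subinterval of $\mathbb{R}$ and the cells are regular enough that these points are easy to verify directly, but a fully general argument would likely proceed by first discretizing $P$ into finitely many levels, constructing $\gamma$ on each discretization via a straightforward Hall-style flow argument, and extracting a weak limit. All remaining bookkeeping, including translating the hypothesis $\nu_-(A(x)) \geq \nu_+(B(x))$ into stochastic dominance and verifying the marginals of the lifted $\gamma$, is routine once the order-theoretic setup is in place.
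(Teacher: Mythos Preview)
Your proposal is correct and shares the paper's core strategy: push $\nu_-$ forward to $P$ along the map $\phi(y)=\inf\{x\in P:y\in A(x)\}$, deduce the stochastic dominance $\nu_+\succeq_{st}\phi_*\nu_-$ from the hypothesis $\nu_-(A(x))\geq\nu_+(B(x))$, apply one-dimensional Strassen on $P$, and then lift back to $X'\times X'$. The one point of departure is in the lift. You disintegrate $\nu_-$ over the fibers of $\phi$ and compose with the conditional kernel of the Strassen coupling, obtaining a single $\gamma$ with exact marginals $\gamma_1=\nu_+$ and $\gamma_2=\nu_-$. The paper instead avoids disintegration by taking $\gamma=\gamma'+\gamma''+\gamma'''$, where $\gamma'$ is the Strassen coupling on $P\times P$, $\gamma''$ is the graph of $\phi$ carrying mass $\nu_-$, and $\gamma'''$ is a second application of Strassen on $P$ reconciling two versions of the pushforward; this more elementary route delivers only $\gamma_1-\gamma_2=\nu_+-\nu_-$, which is all the statement asks for. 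Both arguments ultimately rely on the infimum defining $\phi$ being attained $\nu_-$-almost everywhere, exactly the caveat you flag.
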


\begin{proof}
Define $\nu_-'$ an unsigned measure on $P$ by $\nu_-'(P') = \nu_-( \{y \in X' ~|~ \exists x \in P'. u^*(x) - u^*(y) = \ell_S(x,y)\}$.  Observe that, by construction, $\nu_-'(B(x)) = \nu_-(A(x))$ for all $x \in P$.  By assumption, $\nu_-'(B(x)) = \nu_-(A(x)) \geq \nu_+(B(x))$ for all $x \in P$, so $\nu_+ \succeq_{st} \nu_-'$.  Thus by Strassen's Theorem, there exists a $\gamma'$ on $P \times P$ with marginals $\gamma_1' = \nu_+$ and $\gamma_2' = \nu_-'$ and $x \geq_P x'$ $\gamma'$-almost surely.

Let $\gamma''(\{(x_i,y_i)\}) = \nu_-(\{y_i ~|~ x_i= \min \{x ~|~ y_i \in A(x)\}\})$ (where the $\min$ is taken according to $\geq_P$, also implicitly requiring that $x_i \in P$).
Note that $\gamma_1''$ is a measure on $P$ and $\nu_-' \succeq_{st} \gamma_1''$.  Applying Strassen's theorem again, we get a $\gamma'''$ with marginals $\gamma_1''' = \nu_-'$ and $\gamma_2''' = \gamma_1''$.
Observe that $P \subset X'$, so if $x \geq_P x'$ then $u^*(x) - u^*(x') = \ell_S(x,x')$.
Therefore, we can take $\gamma = \gamma' + \gamma'' +\gamma'''$, and have it satisfy $u^*(x) - u^*(y) = \ell_S(x,y)$ $\gamma$-almost surely.  Further, $\gamma_1 - \gamma_2 = \nu_+ + \gamma_1'' + \nu_-' - \nu_-' -\nu_- - \gamma_1'' = \nu_+ - \nu_-$.  
\end{proof}

Having constructed such a $\gamma^*$, we will then have shown $\int_X u^*d\mu = \int_{X\times X} \ell_S \,d\gamma^*$, which certifies optimality of $u^*$ by weak duality (Lemma~\ref{lem:weak}).

\subsection{Two items, allocate at most one}
\label{sec:example-at-most-one}

\begin{figure}
  \centering
  \begin{tikzpicture}[scale=3]
    \draw (0,0) -- (1,0) -- (1,1) -- (0,1) -- cycle;
    \fill (0,0) circle [radius=1pt];
    \node at (0.1,0.1) {+1};
    \node at (0.5,0.5) {-3};
    \node at (0.5,0.9) {+1};
    \node at (0.9,0.5) {+1};
    \draw[line width=2pt] (1,0) -- (1,1);
    \draw[line width=2pt] (0,1) -- (1,1);
  \end{tikzpicture}~~~
  \begin{tikzpicture}[scale=3]
    \draw (0,0) -- (1,0) -- (1,1) -- (0,1) -- cycle;
    \fill (0,0) circle [radius=1pt];
    \draw[line width=2pt] (1,0) -- (1,1);
    \draw[line width=2pt] (0,1) -- (1,1);
    \draw[dashed] (0,0.577) -- (0.577,0.577);
    \draw[dashed] (0.577,0) -- (0.577,0.577) -- (1,1);
    \node at (0.3,0.8) {$(0,1)$};
    \node at (0.3,0.3) {$(0,0)$};
    \node at (0.8,0.3) {$(1,0)$};
  \end{tikzpicture}~~~
  \begin{tikzpicture}[scale=3]
    \fill[blue!40] (0.577,0.577) -- (1,1) -- (1,0.6) -- (0.577,0.577-0.4) -- cycle;
    \draw (0,0) -- (1,0) -- (1,1) -- (0,1) -- cycle;
    \fill (0,0) circle [radius=1pt];
    \draw[line width=2pt] (1,0) -- (1,1);
    \draw[line width=2pt] (0,1) -- (1,1);
    \draw[dashed] (0,0.577) -- (0.577,0.577);
    \draw[dashed] (0.577,0) -- (0.577,0.577) -- (1,1);
    \draw[blue,line width=2pt] (1,0.6) -- (1,1);
    \def\thexval{1.2}
    \draw (\thexval-0.05,1) -- (\thexval+0.05,1);
    \draw (\thexval,1) -- (\thexval,0.6);
    \draw (\thexval-0.05,0.6) -- (\thexval+0.05,0.6);
    \node at (\thexval+0.08,0.8) {$a$};
    \node at (1.06,0.8) {$B$};
    \node at (0.8,0.62) {$A$};
  \end{tikzpicture}
  \caption{(L) The $\mu$ for the two-item uniform valuaton setting.  (M) The optimal mechanism in this setting when at most one item can be sold.  (R) Construction of a matching $\gamma^*$ certifying optimality.}
  \label{fig:two-items-at-most-one}
\end{figure}
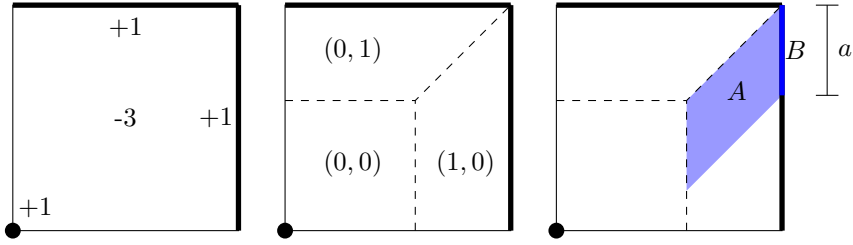

In this setting, values are uniform $[0,1]$ for each item and $S = \{(p_1,p_2) ~|~ 0 \leq p_1,p_2 \wedge p_1 + p_2 \leq 1\}$.  This gives $\mu$ with a mass of $-3$ uniformly distributed on the interior of the unit square, a point mass of $+1$ at the origin, and a mass of $+1$ uniformly distributed on each of the lines $(0,1)-(1,1)$ and $(1,0)-(1,1)$.  (See Figure~\ref{fig:two-items-at-most-one}.)

As we will show, the optimal mechanism here is to offer the choice of either item at a price of $1 / \sqrt{3}$.  This mechanism divides the space into 3 cells: $(0,0)$ when neither item worth more than $1/\sqrt{3}$, $(1,0)$ when at least one item is worth $1/\sqrt{3}$ and item one is preferred to item two, and $(0,1)$ when at least one is worth $1/\sqrt{3}$ with item two preferred.  Each cell satisfies the integrate to 0 condition: the no-sale cell has a negative mass inside of $\int_{0\leq p_1,p_2\leq 1 / \sqrt{3}} -3 dp_1 dp_2  = -3(1/\sqrt{3})^2 = -1$ and there is a point mass of 1 at the origin.  By symmetry, the interior of the other cells also integrate to $-1$, yielding $0$ when including the boundary.  Thus, we may turn to constructing $\gamma^*$ which only matches points within one of these cells.  For the $(0,0)$ cell, this matching is trivial, as we match all the negative mass to the point mass at $(0,0)$.

It remains to show that for the cell where (without loss of generality) item 1 is sold, we can match the interior to the boundary such that a point $(x_1,x_2)$ is matched to a boundary point $(1,x_2')$ with $1-x_2' \geq x_1 - x_2$, thus respecting $\ell_S$.  For this we apply Lemma~\ref{lem:1dmatch}, with $X'$ being the $(1,0)$ cell, $P$ being its right boundary, and of course $\nu = \mu$, as $\mu_+(\cell(1,0)) = \mu_-(\cell(1,0))$ by the above (we drop the double parentheses when plugging in coordinates).  Here $A(1,x_2') = \{x ~|~ 1-x_2' \geq x_1 - x_2\} \cap \cell(1,0)$ is the set of points which can be matched with boundary point $(1,x_2')$ and $B(1,x_2') = \{(1,x_2) ~|~ 1 \leq x_2 \leq x_2'\}$ is an upward-closed portion of the boundary; see Figure~\ref{fig:two-items-at-most-one}(R).  Thus, we need only check that for all $a\in[0,1]$, $\mu_+(B(1,1-a)) = a \leq \mu_-(A(1,1-a))$.
This reduces to two cases: (i) $3(1-1/\sqrt{3})a \geq a$ when $0 \leq a \leq 1/\sqrt{3}$, which holds for all $a\geq 0$, and (ii) $1-\frac 3 2 (1-a)^2 \geq a$ when $1\leq a \leq 1/\sqrt{3}$, which is easily checked on the two endpoints $a=1$ and $a=1/\sqrt{3}$.

\subsection{Two items, allocate exactly one}
\label{sec:example-exactly-one}

In this setting, values are uniform $[0,1]$ for each item and $S = \{(p_1,p_2) ~|~ 0 \leq p_1,p_2 \wedge p_1 + p_2 = 1\}$.  Thus $\mu$ is the same as in the previous example.  We claim that the optimal mechanism is to offer the choice of either the first item at a price of $1/3$ or the second item for free.  This mechanism divides the space into two cells, separated by the line $x_1 = x_2 + 1/3$.  Each cell satisfies the integrate to 0 condition: the $(1,0)$ cell has a mass of $2/3$ along the boundary and the interior mass is $\int_{1/3}^1\int_0^{x_1-1/3} -3 dx_2dx_1 = -2/3$.  It remains to show that for each cell, we can match the interior to the boundary such that $\ell_S(x,y) = y_1 - x_1$ in cell 1 and similarly for cell 2.  For the $(1,0)$ cell this is easily seen by the same sort of application of Lemma~\ref{lem:1dmatch} as before, this time looking at the point $(1,a)$ with $B(1,a)$ again being upward closed: $\tfrac 2 3 - \tfrac 3 2 a^2 \geq \tfrac 2 3 - a$.  For the $(0,1)$ cell, the matching is a bit more complex; Figure~\ref{fig:exact21}(L) diagrams the matching.  The dashed line again divides the cells, with the lower right being the $(1,0)$ cell, for which we have already described the matching.  The $(0,1)$ cell has several areas of positive mass to be matched: the top left, two in the top right, and two in the bottom left.  The top left we match the same as the $(1,0)$ cell.  For the rest, our matching respects $\ell_S$ if points on the boundary are matched with points along a line at a 45 degree angle from them (because it is a translation of $y=x$); we match the two regions in the top right in this way.  For the bottom left, we take advantage of the fact that we are allowed to do mean-preserving spreads: we may replace $\mu$ with some $\mu'$ where all we have done is spread (positive) mass out such that at the end of any spreading operation the center of mass of the mass that has been spread is unchanged (See \citep[Remark 2]{\third} for additional discussion of why this is permissible).  Here we take a mass on the line $y=x$ and spread it in the $(-1,1)$ and $(1,-1)$ directions evenly to cancel out the negative mass in the cell, leaving a residual negative mass of $-1$ on the line $y=x$.  With this $\mu'$ (which clearly has not changed the matching in the other cells) we match this line of negative mass with the origin.

\begin{figure}
  \centering
  \begin{tikzpicture}[scale=3]
    \draw (0,0) -- (1,0) -- (1,1) -- (0,1) -- cycle;
    \fill (0,0) circle [radius=1pt];
    \draw[line width=2pt] (1,0) -- (1,1);
    \draw[line width=2pt] (0,1) -- (1,1);

    \fill[blue!40] (0.33,0) -- (1,0.66) -- (1,0.4) -- (0.33+0.66-0.4,0) -- cycle;
    \draw[blue,line width=2pt] (1,0.4) -- (1,0.66);
    \def\thexval{1.2}
    \draw (\thexval-0.05,0.4) -- (\thexval+0.05,0.4);
    \draw (\thexval,0.4) -- (\thexval,0);
    \draw (\thexval-0.05,0) -- (\thexval+0.05,0);
    \node at (\thexval+0.08,0.2) {$a$};
    \node at (1.06,0.5) {$B$};
    \node at (0.7,0.27) {$A$};

    \draw[dashed] (0.33,0) -- (1,0.66);
    \draw[dotted] (0,0.33) -- (0.66,1);
    \draw[dotted] (0.33,0.66) -- (0.66,0.66);
    \draw[dotted] (0.66,0.33) -- (0.66,0.66);
    \draw[dotted] (0,0) -- (1,1);
    \draw[dotted] (0,0.33) -- (0.66,1);
    \draw[>=latex,->] (0.55,0.75) --++ (0.2,0.2);
    \draw[>=latex,->] (0.75,0.55) --++ (0.2,0.2);

    \foreach \x in {2,...,5} {
    \draw[>=latex,->] (0.1*\x,0.1*\x) --++ (0.1,-0.1);
    \draw[>=latex,->] (0.1*\x,0.1*\x) --++ (-0.1,0.1); }
  \end{tikzpicture}
  ~~~
  \begin{tikzpicture}[xscale=2,yscale=2]
    \draw[>=latex,->] (0,0) -- (2.6,0);
    \draw[>=latex,->] (0,0) -- (0,1.5);

    \draw[dashed] (0,1.23) -- (0.93,0.3) -- (2.6,0.3);
    \draw[dashed] (0.93,0.3) -- (0.93,0);
    \draw[red,dotted] ($(0,2)+0.3*(1,-2)$) edge (1,0);
    \node at (0.97,0.07) {-};
    \draw[red,dotted] (2.28,0.3) edge node[right] {+} (2.36,0.14);
    \draw[>=latex,->] (1.15,0.55) --++ (0.3,0.3);
    \draw[>=latex,->] (1.6,0.15) --++ (0.4,0);
  \end{tikzpicture}
  ~~~~
  \begin{tikzpicture}[scale=3]
    \def\offset{0.3}
    \fill (\offset,1-\offset) circle [radius=.7pt];
    \fill (1-\offset,\offset) circle [radius=.7pt];
    \draw[>=latex,->] (\offset,1-\offset) --++ (0,\offset);
    \draw[>=latex,->] (1-\offset,\offset) --++ (0,-\offset);
    \foreach \x in {0,0.5,1} {
      \draw[black!80] (\x,0) -- (\x,1);
      \draw[black!80] (0.1,0.6*\x+0.2) -- (0.9,0.6*\x+0.2);
    }
    \foreach \x in {0.8,0.9,1} {
      \draw[black!80] (\x-0.8,\x) -- (\x,\x-0.8);
    }
    \draw[red!80] (0.35,1) -- (0.65,0);
  \end{tikzpicture}
  \caption{(L) Matching for two items, allocating exactly one. (M) Matching for the deterministic mechanism with two items with exponentially-distributed valuations.
 (R) Depiction of a non-mean-preserving operation which is still permitted in the deterministic setting.}
  \label{fig:exact21}
  \label{fig:spread}
  \label{fig:exponential21}
\end{figure}
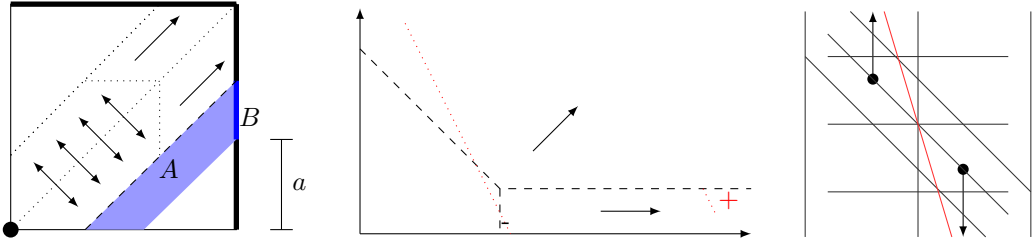

\subsection{Deterministic allocations with two exponential items}
\label{sec:example-deterministic}

In this setting, values are exponentially distributed on $[0,\infty)$ with $\lambda_1 = 2$ and $\lambda_2 = 1$.  Section 9 of~\citet{\first} analyzes this setting and shows that the optimal mechanism is randomized (sell both items at a price $p^*$ of approximately 1.2319, sell item 1 with probability 1 and item 2 with probability 0.5 for a price of 1, sell neither for a price of 0).  Here we consider the optimal deterministic mechanism, captured by the restriction $S = \{(0,0),(1,0),(0,1),(1,1)\}$, which turns out to be the following: sell both for a price of approximately 1.2286, and sell just item 1 for a price of approximately 0.9288.%
\footnote{The exact values are those which cause the integrate to zero condition to be satisfied in all three cells.}
This case is a good example to illustrate how our approach extends to deterministic mechanisms (i.e. non-convex sets of subgradients), since it is the perhaps then simplest case where the optimal randomized mechanism is not in fact deterministic.

To verify the optimality of the deterministic mechanism, we need to match each of the regions shown in Figure~\ref{fig:exponential21}(M).  The zero region integrates to 0, so every point can be matched with the origin.  However, the other two regions are more problematic, and cannot immediately be matched.  Consider the cell where item 1 is sold.  To get a ``tight'' matching, we can only match along horizontal lines as otherwise we get the added cost from the $(1,1)$ subgradient.  This leaves negative mass we cannot match at the left boundary with positive mass out toward infinity, as shown in red in Figure~\ref{fig:exponential21}(M), where the main dashed red line has the property that integrating from it to infinity yields zero.  This same line shows that there is also a problem in the cell where both goods are sold, because the line passes through the zero cell.  This means that there is excess positive mass in the lower right ``corner'' of this cell which we cannot easily match.

The key to resolving this is shown in Figure~\ref{fig:spread}(R).  Suppose we have positive mass at the two black points.  We claim that we can replace $\mu$ with a measure $\mu'$ given by moving mass up from the first one and down from the second one in equal amounts, in a non-mean-preserving spread.  To see why, consider where the two points sit relative to any possible cell boundary.  Since there are only 3 allocations, there are only 9 cases (the gray lines).  In 6 of them, the two points end up in the same cell (at least relative to that boundary), so this move does not change the integral of the function because if we do this to two points in the same cell it is effectively a mean-preserving spread.  In the other three cases we can only increase the value:
\begin{enumerate}
\item When the boundary is vertical this move has no effect because a vertical boundary means only item 1 is allocated and we are only changing values for item 2.
\item When the boundary is horizontal this increases the value because we are increasing the value for item 2 in a cell where item 2 is allocated and decreasing it where it is not (or is allocated to a lesser extent).
\item When the boundary is at 45 degrees the same applies, just item 1 happens to be (irrelevantly) allocated as well.
\end{enumerate}
While we have drawn this with the line between the points at 45 degrees, but all that matters is that it is that or steeper.

Note that since we need a strictly lower optimal dual value than the randomized case, for this operation to be helpful, it must not be permissible in the randomized case.  Indeed this is the case, as we illustrate by the red line, which is a possible boundary in the randomized case: with such a boundary, our operation decreases the value of the integral, so it would not be permitted.

We also need to find some mass to move up, and moving mass in the item 1 cell does not help since we need to move all that positive mass downward.  The key is to observe that 45 degree lines near the left end of the item 1 cell pass into the zero cell, and that instead of matching the negative mass there to the positive mass at the origin we can move the mass at the origin to those points (which is equally costless).  The positive mass is still too far out in the $(1,0)$ direction, but we can fix this with a mean-preserving spread along each horizontal line where we move the mass we need towards zero in exchange for pushing some mass even further out (this has no effect on the cost of the matching).

To be more explicit about the construction, let $pos1(z) = \int_{x_1= 0.9288}^\infty \mu(x_1,1.2319 - 0.9288-z)dx_1$ be the excess positive mass along the horizontal line a distance $z$ down in the item 1 cell.  Similarly, let $neg1(z) = \int_{x_2=0}^{2(1- 0.9288-z)} \mu( 0.9288+z,x_2)dx_2$ be the excess negative mass below the dashed red line in Figure~\ref{fig:exponential21}(M) along the vertical line at $x_1 = 0.9288+z$.  It is easy to verify that these new measures have the same total measure, are 1 dimensional, and $pos(z)$ stochastically dominates $neg(z)$.  Thus, we can find a matching such that that the excess positive mass from $z_1$ is matched to excess negative mass $z_2$ such that $z_2 < z_1$.  We cannot give a closed form for this, but as we will see this is not needed.

To move this mass downward as described (after first performing a mean-preserving spread so the mass we need to move downward is just a point mass), we need to move mass in the zero cell upward.  Conveniently, we can do so by moving equal mass on the origin to any point on the 45 degree line up and to the left from this point mass.  Thus, it suffices to verify that sufficient negative mass always exists on such lines.  In fact, let $(x_1^*,x_2^*)$ be the point at which the red dashed line in Figure~\ref{fig:exponential21}(M) crosses the boundary between the cell where no item is sold and the cell where both items are sold.  We use only the mass at points where $x_1 < x_1^*$, as we will use the mass at points with $x_1 \geq x_1^*$ for \raf{to?} \ian{I do mean for} the cell where both items are sold.  It is straightforward to verify numerically that such a line that has the least negative mass (which is exactly the boundary between the zero cell and the cell where both items are sold), has more mass than any value of $neg(z)$ by at least a factor of 20.

Having created this $\mu'$, we have $\int_{x_1 = 0.9288}^\infty d\mu'(x_1,x_2) = 0$ for all $x_2 < 1.2319 - 0.9288$.  On each of these lines the negative mass is entirely to the left of the positive mass, so by a direct application Strassen's Theorem we can construct a $\gamma^*$ that matches along each of these lines.

Similarly, in the cell where both items are sold, we can move mass at a 45 degree angle up and to the left while moving mass in the zero cell at a 45 degree angle down and to the right.   Once this is done, we can then match this cell in the standard way (each negative point matched to positive point(s) up and to the right of it) via an application of Theorem 7.4 of~\citet{\first} (another application of Strassen's theorem), in the same manner as their analysis in their Theorem 9.3.

To make this more precise, let $pos2(z) = \int_{x_1= 0.9288-z}^\infty \mu(x_1,1.2319 - 0.9288+z)dx_1$ be the excess positive mass along the horizontal line a distance $z$ up in the cell where both items are sold.  Similarly, let $neg2(z) = \int_{x_1=0}^{x_1^*-z/2} \mu(x_1,x_1^*+x_2^*+z/2-x_1)dx_1$ be the excess negative mass below on the 45 degree line inside the dashed red line shifted upward from the cell boundary by $z$.  Again Strassen's theorem shows we can find an appropriate place to shuffle the excess positive mass to the desired negative mass.  To counteract this moving of positive mass up and to the left on a 45 degree angle, we move mass from the origin to some point to the right of the point we are moving the excess positive mass out of to create a negative point mass.  By construction, these are all at or to the left of $x_1^*$, thus it suffices to verify that the total mass to the right of $x_1^*$ is the zero cell excess the total mass we need to move, which is easily done numerically. 
Note that the shuffle we performed in this cell had no effect on the ultimate cost of this matching because moving mass at a 45 degree angle up and to the left exactly counters the increase in cost from the leftward movement with the decrease in cost from the upward movement.

Having created this $\mu'$, we can now apply Theorem 7.4 of~\citet{\first}.  By construction, we can first match all the horizontal lines with $x_2 \leq x_2^*$ in the same manner we did in the cell where item 1 is sold.  For the remainder of the cell the application of  Theorem 7.4 of~\citet{\first} is now straightforward because we have because as in their application of it the zero cell is entirely below the dashed red line (what they term the ``absorbing hyperplane''), and our construction of $\mu'$ has not affected any mass above or to the right of the dashed red line.

\subsection{Optimal mechanism for two non-additive items}
\label{sec:example-nonadd}

Consider a setting values are uniform $[0,1]$ for each item individually but getting both items is twice as valuable as their sum.  Thus $X = \{(x1,x2,x3) ~|~ 0\leq x_1 \leq 1, 0 \leq x_2 \leq 1, x_3 = 2 (x_1 + x_2)\}$ with $S = \{ (p_1,p_2,p_3) ~|~ 0 \leq p_1,p_2,p_3 \wedge p_1+p_2+p_3 \leq 1\}$.  However, it is convenient to instead use the equivalent representation $X = \{(x1,x2) ~|~ 0\leq x_1,x_2 \leq 1\}$ with $S=\conv(\{(0,0),(1,0),(0,1),(2,2)\})$, which allows us to use the same $\mu$ as in our first two examples.

We show that grand bundling at a price of $\sqrt{8/3}$ is optimal.  (Note that this choice causes both cells to satisfy the integrate to zero condition).
To construct an appropriate $\gamma^*$, observe that a point $(x_1,x_2)$ with $x_1 > x_2$ and $2(x_1 + x_2) > \sqrt{8/3}$ can be matched with a point $(1,x)$ if $(2,2) \cdot ((1,x)-(x_1,x_2)) \geq (1,0) \cdot ((1,x)-(x_1,x_2))$.  Again we apply Lemma~\ref{lem:1dmatch} to show that one half of the cell can be matched to the right boundary and the other, by symmetry to the top boundary.  In particular, we take $A(1,x_2') = \{x ~|~ x_1 > x_2 \wedge 2(x_1 + x_2) \geq \sqrt{8/3} \wedge 2(1-x_1 +x_2' - x_2) \geq 1 - x_1\} $ is the set of points in the correct half of the cell which can be matched with boundary point $(1,x_2')$ and $B(1,x_2') = \{(1,x_2) ~|~ 0 \leq x_2 \leq x_2'\}$ is a downward-closed portion of the boundary.
Thus, we need only check that for all $a\in[0,1]$, $\mu_+(B(1,1-a)) = a \leq \mu_-(A(1,1-a))$, which is easily verified numerically.

In fact we can generalize this analysis to the case where bundling is $\alpha$ times as valuable and conclude that grand bundling is optimal at a price of $\alpha\sqrt{2/3}$ if and only if $\alpha$ is greater than approximately $1.24$.

\section{Discussion}

We have given a duality framework that generalizes that of \citet{\third} by allowing restrictions on the set of allowable allocations.  This allows us to model 
problems that previous frameworks could not handle, such as
\begin{itemize}
\item non-additive valuations,
\item multiple copies of items,
\item a requirement that the mechanism be deterministic, and
\item allocation restrictions such as matroid constraints.
\end{itemize}
One particularly interesting combination is  an arbitrary convex type space with the constraint that the allocation must be a probability distribution.  This recovers mechanism design.  Thus, in a sense our framework is fully general.

We have provided both a weak duality result (Lemma~\ref{lem:weak}) that provides a recipe for certifying the optimality of mechanisms, as well as a strong duality result (Theorem~\ref{thm:strong}) which guarantees that such a certificate always exists, even in settings where an optimal mechanism does not exist.
The key technical tool we use is the observation that we can restrict the set of convex functions to those that correspond to mechanisms that satisfy our allocation restriction, leading to an alternative integral stochastic order $\succeq_{X,S}$.  Using this, as well as facts from convex analysis about subgradients, we provide a new characterization result (Lemma~\ref{lem:boundary}) that identifies the set of allocations an optimal mechanism can use as those that are exposed points of the set.

For the remainder of the paper, we turn to additional intuition from this lemma, and then discuss some of the limitations of our results, how these can be relaxed, and other future research directions.

\subsection{The Simple Economics of Optimal Mechanism Design}

In their classic paper, \citet{bulow1989simple} reinterpreted Myerson's~\citeyear{myerson1981optimal} characterization of optimal single item auctions in terms of pricing.  In particular, they observed that his characterization in terms of virtual valuations could instead be described as finding a price such that the derivative of expected revenue with respect to that price is zero.
In the examples, we made frequent use of the necessary condition that $\mu(\cell(s)) = 0$ for each allocation $s \in S$.  As we show, this necessary condition has a simple economic interpretation: the partial derivative of expected revenue with respect to that price must be zero in an optimal auction.

Consider an allocation $s$ that is purchased with positive probability in the optimal mechanism given by $u^*$.
$$\int_{\cell(s)} u^* d\mu = \int_{\cell(s)} (x \cdot s - p) d\mu(x).$$
The marginal revenue from selling $s$ is the derivative of this with respect to $p$, which is $\int_{\cell(s)} -d\mu$, and by the envelope theorem, this is the marginal revenue of the overall mechanism.
Thus, by the optimality of $u^*$ we must have $\mu(\cell(s)) = 0$.

This analysis also shows why this necessary condition is not sufficient in our multidimensional setting: for all fixed choices of allocations to offer, there exists an optimal pricing, but that does not mean that the menu we have chosen to price is the optimal menu.

\subsection{Zero Set Mechanisms}

Daskalakis et al. \citeyear{\third} define a class of mechanisms for the additive setting they call zero set mechanisms.  Such mechanisms have two features.  First, they either allocate nothing or allocate at least one good with probability 1.  Second, they are entirely determined by the boundary of the cell where no goods are allocated (i.e. the zero set).  Geometrically, this means the cell for every other allocation borders the zero set.  All known optimal mechanisms for independently distributed items are in this class.

Lemma~\ref{lem:boundary} shows that the first property is not an accident.  The only allocations in $\exp(S)$ that are not permissible in a zero set mechanism are those which allocate at least one item with probability zero and there is no item allocated with probability one.  Such allocations would only be permissible if $\gamma^*$ were such that we could match all the $x$s with that allocations to points $y$ directly above them.  All explored $\mu$ render this impossible and it is possible that in fact no such $\mu$ exists and all optimal mechanisms for the additive setting satisfy this first property of zero-set mechanisms.

\subsection{Approximate Optimality of Deterministic Mechanisms}

A significant current research direction is identifying simple classes of mechanism that are approximately optimal in some setting~\citep{hart2012approximate,alaei2013simple,giannakopoulos2015bounding}.
One such simple class is deterministic mechanisms.  Let $D$ be a finite set of deterministic allocations and let $\conv(D)$ be the set of allocations that randomizes over them.  Note that $\ell_D = \ell_{\conv(D)}$, which means that the dual optimization problems differ only in their set of feasible $\gamma$.  An interesting future research direction is to try and use this fact to give a bound on the ratio between optimal revenue and optimal deterministic revenue settings of interest.

\subsection{Limitations}

As discussed, our framework is fully general, in that it encompasses all of mechanism design.  However, there are a number of technical caveats that still restrict the applicability of our results.  Most notable is our reliance on Theorem~\ref{thm:mu}, which imposes significant restrictions on the prior $f$ over agent types.  Most notably it imposes a differentiability requirement, as well as requiring the support of $f$ to be a subset of a compact set.  Similarly, several steps of the proof of Theorem~\ref{thm:strong} rely on the compactness of $X$.  We know that these restrictions can be relaxed at least somewhat, as the example of exponentially distributed $f$ shows.  \citet{\third} has some discussion related to this relaxation, but a fuller understanding is an important direction in understanding the limitations of this approach.

A smaller restriction is our requirement that $S$ be closed.  This is primarily used in Lemma~\ref{lem:21}.  It is unclear how important relaxing this restriction his however, as most natural problems seem to yield a closed $S$.  Further, cases with open $S$ are often ones where no optimal mechanism exists, although it is possible to construct contrived examples where one still exists.%
\footnote{Take any closed $S$ and add an additional outcomes, but restrict $X$ so that the value for that outcomes is always zero.  Extend $S$ so that it is open only with respect to that outcome and an optimal mechanism will still exist because the allocation for that outcome is irrelevant to agent utility.}

Finally, while not a limitation of our framework per se, applying it to practical examples requires constructing a $\gamma^*$ to certify the optimality of $u^*$.  The primary tool we have for this is Strassen's theorem, but this is a difficult tool to apply in more than two dimensions.  \citet{\second}\citeyear{giannakopoulos2015selling} have had some success in constructing matchings by using Hall's theorem on finite approximations of the problem, but even this approach is difficult to apply in higher dimensions.  Better tools here would greatly extend the practical reach of our framework.

\bibliography{diss,optimal}

\begin{thebibliography}{}

\bibitem[\protect\citeauthoryear{Alaei, Fu, Haghpanah, and Hartline}{Alaei
  et~al\mbox{.}}{2013}]{alaei2013simple}
{\sc Alaei, S.}, {\sc Fu, H.}, {\sc Haghpanah, N.}, {\sc and} {\sc Hartline,
  J.} 2013.
\newblock The simple economics of approximately optimal auctions.
\newblock In {\em Foundations of {Computer} {Science} ({FOCS}), 2013 {IEEE}
  54th {Annual} {Symposium} on}. IEEE, 628--637.

\bibitem[\protect\citeauthoryear{Borwein and Vanderwerff}{Borwein and
  Vanderwerff}{2010}]{borwein2010convex}
{\sc Borwein, J.~M.} {\sc and} {\sc Vanderwerff, J.~D.} 2010.
\newblock {\em Convex {Functions}: {Constructions}, {Characterizations} and
  {Counterexamples}}.
\newblock Cambridge University Press.

\bibitem[\protect\citeauthoryear{Bulow and Roberts}{Bulow and
  Roberts}{1989}]{bulow1989simple}
{\sc Bulow, J.} {\sc and} {\sc Roberts, J.} 1989.
\newblock The simple economics of optimal auctions.
\newblock {\em The Journal of Political Economy\/}, 1060--1090.

\bibitem[\protect\citeauthoryear{Chung and Olszewski}{Chung and
  Olszewski}{2007}]{chung2007non-differentiable}
{\sc Chung, K.-S.} {\sc and} {\sc Olszewski, W.} 2007.
\newblock A non-differentiable approach to revenue equivalence.
\newblock {\em Theoretical Economics\/}~{\em 2,\/}~4, 469--487.

\bibitem[\protect\citeauthoryear{Daskalakis}{Daskalakis}{2015}]{daskalakis2015multi-item}
{\sc Daskalakis, C.} 2015.
\newblock Multi-item auctions defying intuition?
\newblock {\em ACM SIGecom Exchanges\/}~{\em 14,\/}~1, 41--75.

\bibitem[\protect\citeauthoryear{Daskalakis, Deckelbaum, and Tzamos}{Daskalakis
  et~al\mbox{.}}{2013}]{daskalakis2013mechanism}
{\sc Daskalakis, C.}, {\sc Deckelbaum, A.}, {\sc and} {\sc Tzamos, C.} 2013.
\newblock Mechanism design via optimal transport.
\newblock In {\em Proceedings of the fourteenth {ACM} conference on
  {Electronic} commerce}. ACM, 269--286.

\bibitem[\protect\citeauthoryear{Daskalakis, Deckelbaum, and Tzamos}{Daskalakis
  et~al\mbox{.}}{2015}]{daskalakis2015strong}
{\sc Daskalakis, C.}, {\sc Deckelbaum, A.}, {\sc and} {\sc Tzamos, C.} 2015.
\newblock Strong duality for a multiple-good monopolist.
\newblock In {\em Proceedings of the {Sixteenth} {ACM} {Conference} on
  {Economics} and {Computation}}. ACM, 449--450.
\newblock Full version available as arXiv:1409.4150.

\bibitem[\protect\citeauthoryear{Frongillo and Kash}{Frongillo and
  Kash}{2014}]{frongillo2014general}
{\sc Frongillo, R.} {\sc and} {\sc Kash, I.} 2014.
\newblock General truthfulness characterizations via convex analysis.
\newblock In {\em Web and {Internet} {Economics}}. Springer, 354--370.

\bibitem[\protect\citeauthoryear{Giannakopoulos}{Giannakopoulos}{2015}]{giannakopoulos2015bounding}
{\sc Giannakopoulos, Y.} 2015.
\newblock Bounding the optimal revenue of selling multiple goods.
\newblock {\em Theoretical Computer Science\/}~{\em 581}, 83--96.

\bibitem[\protect\citeauthoryear{Giannakopoulos and Koutsoupias}{Giannakopoulos
  and Koutsoupias}{2014}]{giannakopoulos2014duality}
{\sc Giannakopoulos, Y.} {\sc and} {\sc Koutsoupias, E.} 2014.
\newblock Duality and optimality of auctions for uniform distributions.
\newblock In {\em Proceedings of the fifteenth {ACM} conference on {Economics}
  and computation}. ACM, 259--276.

\bibitem[\protect\citeauthoryear{Giannakopoulos and Koutsoupias}{Giannakopoulos
  and Koutsoupias}{2015}]{giannakopoulos2015selling}
{\sc Giannakopoulos, Y.} {\sc and} {\sc Koutsoupias, E.} 2015.
\newblock Selling two goods optimally.
\newblock In {\em Automata, {Languages}, and {Programming}}. Springer,
  650--662.

\bibitem[\protect\citeauthoryear{Hart and Nisan}{Hart and
  Nisan}{2012}]{hart2012approximate}
{\sc Hart, S.} {\sc and} {\sc Nisan, N.} 2012.
\newblock Approximate revenue maximization with multiple items.
\newblock In {\em Proceedings of the 13th {ACM} {Conference} on {Electronic}
  {Commerce}}. ACM, 656--656.

\bibitem[\protect\citeauthoryear{Kamae, Krengel, and O'Brien}{Kamae
  et~al\mbox{.}}{1977}]{kamae1977stochastic}
{\sc Kamae, T.}, {\sc Krengel, U.}, {\sc and} {\sc O'Brien, G.~L.} 1977.
\newblock Stochastic {Inequalities} on {Partially} {Ordered} {Spaces}.
\newblock {\em The Annals of Probability\/}~{\em 5,\/}~6, 899--912.

\bibitem[\protect\citeauthoryear{Lindvall}{Lindvall}{1999}]{lindvall1999strassens}
{\sc Lindvall, T.} 1999.
\newblock On {Strassen}'s {Theorem} on {Stochasic} {Dominance}.
\newblock {\em Electronic Communications in Probability\/}~{\em 4}, 51--59.

\bibitem[\protect\citeauthoryear{Manelli and Vincent}{Manelli and
  Vincent}{2006}]{manelli2006bundling}
{\sc Manelli, A.~M.} {\sc and} {\sc Vincent, D.~R.} 2006.
\newblock Bundling as an optimal selling mechanism for a multiple-good
  monopolist.
\newblock {\em Journal of Economic Theory\/}~{\em 127,\/}~1, 1--35.

\bibitem[\protect\citeauthoryear{Manelli and Vincent}{Manelli and
  Vincent}{2007}]{manelli2007multidimensional}
{\sc Manelli, A.~M.} {\sc and} {\sc Vincent, D.~R.} 2007.
\newblock Multidimensional mechanism design: {Revenue} maximization and the
  multiple-good monopoly.
\newblock {\em Journal of Economic Theory\/}~{\em 137,\/}~1, 153--185.

\bibitem[\protect\citeauthoryear{McAfee and McMillan}{McAfee and
  McMillan}{1988}]{mcafee1988multidimensional}
{\sc McAfee, R.} {\sc and} {\sc McMillan, J.} 1988.
\newblock Multidimensional incentive compatibility and mechanism design.
\newblock {\em Journal of Economic Theory\/}~{\em 46,\/}~2, 335--354.

\bibitem[\protect\citeauthoryear{Myerson}{Myerson}{1981}]{myerson1981optimal}
{\sc Myerson, R.~B.} 1981.
\newblock Optimal auction design.
\newblock {\em Mathematics of operations research\/}, 58--73.

\bibitem[\protect\citeauthoryear{Müller and Stoyan}{Müller and
  Stoyan}{2002}]{muller2002comparison}
{\sc Müller, A.} {\sc and} {\sc Stoyan, D.} 2002.
\newblock {\em Comparison {Methods} for {Stochastic} {Models} and {Risks}}.
\newblock Wiley.

\bibitem[\protect\citeauthoryear{Nöldeke and Samuelson}{Nöldeke and
  Samuelson}{2015}]{noldeke2015implementation}
{\sc Nöldeke, G.} {\sc and} {\sc Samuelson, L.} 2015.
\newblock The {Implementation} {Duality}.
\newblock Working papers 2015/01, Faculty of Business and Economics -
  University of Basel.

\bibitem[\protect\citeauthoryear{Pavlov}{Pavlov}{2011}]{pavlov2011optimal}
{\sc Pavlov, G.} 2011.
\newblock Optimal mechanism for selling two goods.
\newblock {\em The BE Journal of Theoretical Economics\/}~{\em 11,\/}~1.

\bibitem[\protect\citeauthoryear{Rochet and Chone}{Rochet and
  Chone}{1998}]{rochet1998ironing}
{\sc Rochet, J.-C.} {\sc and} {\sc Chone, P.} 1998.
\newblock Ironing, {Sweeping}, and {Multidimensional} {Screening}.
\newblock {\em Econometrica\/}~{\em 66,\/}~4, 783--826.

\bibitem[\protect\citeauthoryear{Strassen}{Strassen}{1965}]{strassen1965existence}
{\sc Strassen, V.} 1965.
\newblock The existence of probability measures with given marginals.
\newblock {\em The Annals of Mathematical Statistics\/}, 423--439.

\bibitem[\protect\citeauthoryear{Thanassoulis}{Thanassoulis}{2004}]{thanassoulis2004haggling}
{\sc Thanassoulis, J.} 2004.
\newblock Haggling over substitutes.
\newblock {\em Journal of Economic Theory\/}~{\em 117,\/}~2, 217--245.

\bibitem[\protect\citeauthoryear{Villani}{Villani}{2003}]{villani2003topics}
{\sc Villani, C.} 2003.
\newblock {\em Topics in {Optimal} {Transportation}}.
\newblock American Mathematical Society.

\end{thebibliography}
\end{document}